\documentclass[12pt]{article}

\usepackage[colorlinks, linkcolor=blue, citecolor=blue]{hyperref}           
\usepackage{color}
\usepackage{graphicx,subfigure,amsmath,amssymb,amsfonts,bm,epsfig,epsf,url,dsfont}
\usepackage{amsthm}

\newtheorem{thm}{Theorem}[section]
\newtheorem{lem}[thm]{Lemma}
\newtheorem{assum}[thm]{Assumption}
\newtheorem{proposition}[thm]{Proposition}
\newtheorem{remark}[thm]{Remark}
\newtheorem{definition}[thm]{Definition}

\newtheorem{corollary}[thm]{Corollary}

\newtheorem{problem}[thm]{Problem}

\usepackage{tikz}
\usepackage{bbm}
\usepackage[small,bf]{caption}
\usepackage{natbib}
\usepackage[top=1in,bottom=1in,left=1in,right=1in]{geometry}
\usepackage{fancybox}
\usepackage{hyperref}
\usepackage{algorithm}
\usepackage{verbatim}

\usepackage{threeparttable}
\usepackage[titletoc,toc]{appendix}

\usepackage{mathtools}

\usepackage{scalerel,stackengine}
\stackMath
\newcommand\reallywidehat[1]{%
\savestack{\tmpbox}{\stretchto{%
  \scaleto{%
    \scalerel*[\widthof{\ensuremath{#1}}]{\kern-.6pt\bigwedge\kern-.6pt}%
    {\rule[-\textheight/2]{1ex}{\textheight}}
  }{\textheight}%
}{0.5ex}}%
\stackon[1pt]{#1}{\tmpbox}%
}

\newcommand*{\rom}[1]{\expandafter\@slowromancap\romannumeral #1@}

\DeclareMathOperator*{\argmin}{arg min}

\usepackage{xspace}

\numberwithin{equation}{section}

\allowdisplaybreaks
\usepackage[utf8]{inputenc}
\usepackage{authblk}
\usepackage{url}
\usepackage[left=1in,right=1in,top=1in,bottom=1in]{geometry}

\usepackage{titlesec}
\titleformat{\part}[display]
  {\centering\bfseries\LARGE}
  {}{0pt}{}
\titlespacing*{\part}{0pt}{1.0em}{1.0em}

\usepackage{booktabs}
\usepackage{array}
\usepackage{enumitem}
\newcolumntype{L}[1]{>{\raggedright\arraybackslash}p{#1}}
\setlist[itemize]{leftmargin=*,nosep,topsep=0pt,parsep=0pt,partopsep=0pt}

\makeatletter
\renewcommand\paragraph{\@startsection{paragraph}{4}{\z@}%
  {1ex}
  {-1em}
  {\normalfont\normalsize\bfseries}}
\makeatother

\begin{document}
\title{Projected Multi-Reference Alignment}

\author[1]{Amnon Balanov\thanks{Corresponding author: \url{amnonba15@gmail.com}}}

\author[3]{Josh Katz}
\author[1]{Tamir Bendory}
\author[2]{Dan Edidin}

\affil[1]{\normalsize School of Electrical and Computer Engineering, Tel Aviv University, Tel Aviv 69978, Israel}

\affil[2]{Department of Mathematics, University of Missouri, Columbia, MO 65211, USA}

\affil[3]{The MITRE Corporation, Bedford, 01730, MA, USA}

\maketitle

\begin{abstract}

Motivated by structural biology applications, we study the projected multi-reference alignment (MRA) model introduced in~\cite{bandeira2023estimation}, in which an unknown signal is observed through noisy samples, each generated by applying a random cyclic shift followed by a fixed projection.
The projection merges reflection-symmetric index pairs, thereby discarding orientation information.
The goal is to recover the dihedral orbit of the signal.
We prove that in the high-noise regime, the first three  moments of the projected observations determine a generic dihedral orbit.
The main mechanism is a reduction, at the moment level, from projected MRA to the reflection-invariant phase-coupling structure of dihedral MRA. In Fourier-cosine coordinates adapted to the projection, the first moment determines the mean component, the second moment determines the Fourier magnitudes, and selected third moments yield the cosine phase-coupling relations appearing in the dihedral bispectrum. These relations lead to a constructive recovery scheme from moments up to order three.
We complement the population theory with finite-sample experiments comparing expectation--maximization (EM), direct moment optimization, and direct Fourier-cosine moment optimization. The results show that, in the high-noise regime, both EM and direct moment optimization are consistent with the predicted third-moment sample-complexity scaling $n \gtrsim \sigma^6$, where $n$ is the number of observations and $\sigma^2$ is the noise variance.
\end{abstract}

\renewcommand{\thefootnote}{$\ast$}
\footnotetext{Approved for Public Release; Distribution Unlimited. Public Release Case Number 26-1056. Affiliation with the MITRE Corporation is for identification purposes only and is not intended to convey or imply MITRE’s concurrence with, or support for, the positions, opinions, or viewpoints expressed by the
author. \copyright 2026 The MITRE Corporation. ALL RIGHTS RESERVED.}

\section{Introduction}

Multi-reference alignment (MRA) concerns the recovery of an unknown signal from noisy observations generated under random group actions~\cite{bandeira2023estimation,bendory2017bispectrum,abbe2018estimation,bandeira2014multireference,bandeira2020optimal,chen2018spectral,bendory2022dihedral,fan2023likelihood,ghosh2021multi,hirn2021wavelet,katsevich2023likelihood,romanov2021multi,perry2019sample}. In its classical cyclic form, one observes randomly shifted copies of an unknown one-dimensional signal and seeks to recover the signal up to the intrinsic cyclic-shift symmetry. Over the past decade, MRA has become a central model in statistical signal processing, both as a tractable orbit-recovery problem and as a simplified proxy for more complex inverse problems in imaging and structural biology.

In many imaging problems, however, the observations are not fully transformed copies of the unknown object. Rather, each measurement is a lower-dimensional projection of a transformed object. A prominent example is single-particle cryogenic electron microscopy (cryo-EM), where each image is modeled as a noisy tomographic projection of a three-dimensional volume at an unknown viewing direction~\cite{singer2020computational,bendory2020single}. Such projections can change the invariant structure of the recovery problem: by discarding directional information, they may introduce additional symmetries and enlarge the symmetry group of the recoverable object. A familiar example is the handedness, or chirality, ambiguity in cryo-EM: without additional information such as tilt-pair data, projection images do not determine the absolute hand of the reconstructed volume~\cite{rosenthal2003optimal}. Thus, projection does not merely reduce dimension; it can enlarge the inherent symmetry of the solutions so that the object is identifiable only up to a larger symmetry than the one present in the original transformation model.

\paragraph{The projected multi-reference alignment model.}
Compared with classical MRA, orbit recovery from projected observations is less well understood, and general identifiability results for projection models remain limited~\cite{balanov2026group}. As an initial finite-dimensional setting in which projection-induced effects can be analyzed explicitly, we study the \emph{projected} MRA model introduced in~\cite{bandeira2023estimation}; see Figure~\ref{fig:1}(a) for an illustration of the observation model.

Formally, let $p=2q+1$ be odd, and let $\theta\in\mathbb{R}^p$ be an unknown real signal indexed by the cyclic grid $\mathbb{Z}/p$. Each observation is obtained by applying a random cyclic shift to $\theta$, followed by a fixed projection that merges reflection-symmetric index pairs. Thus the observations $y_i\in\mathbb{R}^q$ are given by
\begin{align}
    y_i=\Pi(R_{\ell_i}\theta)+\xi_i,
    \label{eqn:main-projected-MRA}
\end{align}
where $R_{\ell_i}$ is the cyclic-shift operator, $\ell_i\sim \mathrm{Unif}(\mathbb{Z}/p)$, $\Pi:\mathbb{R}^p\to\mathbb{R}^q$ is the projection that sums each pair of indices $j$ and $-j$ on the cyclic grid; see~\eqref{eq:projection-map-model}; and $\xi_i$ is additive noise. 

Although the latent transformation in~\eqref{eqn:main-projected-MRA} is only a cyclic shift, the projection enlarges the effective symmetry of the problem. By merging reflection-symmetric index pairs, the projection masks the orientation of the cyclic ordering: traversing the underlying signal clockwise or counterclockwise gives the same projected orbit. Consequently, the signal is identifiable only up to the dihedral action generated by cyclic shifts and reflection, and the 
recovery goal is to infer the dihedral orbit of $\theta$ from the observations $y_1,\dots,y_n$;
see Figure~\ref{fig:1}(b) for an illustration. In this sense, projected MRA exhibits a basic feature of projection-based inverse problems: the group acting in the data-generating model may be smaller than the symmetry group of the identifiable solution.

At a structural level, this model shares with cryo-EM the form ``random transformation followed by projection and noise'': an unknown object is acted on by an unobserved group element, and only a lower-dimensional projection of the transformed object is measured. The analogy is not literal: here the projection is a reflection-symmetric pairing map on an odd one-dimensional cyclic grid, rather than a tomographic projection of a three-dimensional volume. However, the value of the model is that it isolates, in a tractable finite-dimensional setting, how projection can change both the identifiable symmetry and the low-order moment information available for recovery.

\begin{figure}[t!]
    \centering
    \includegraphics[width=0.9\linewidth]{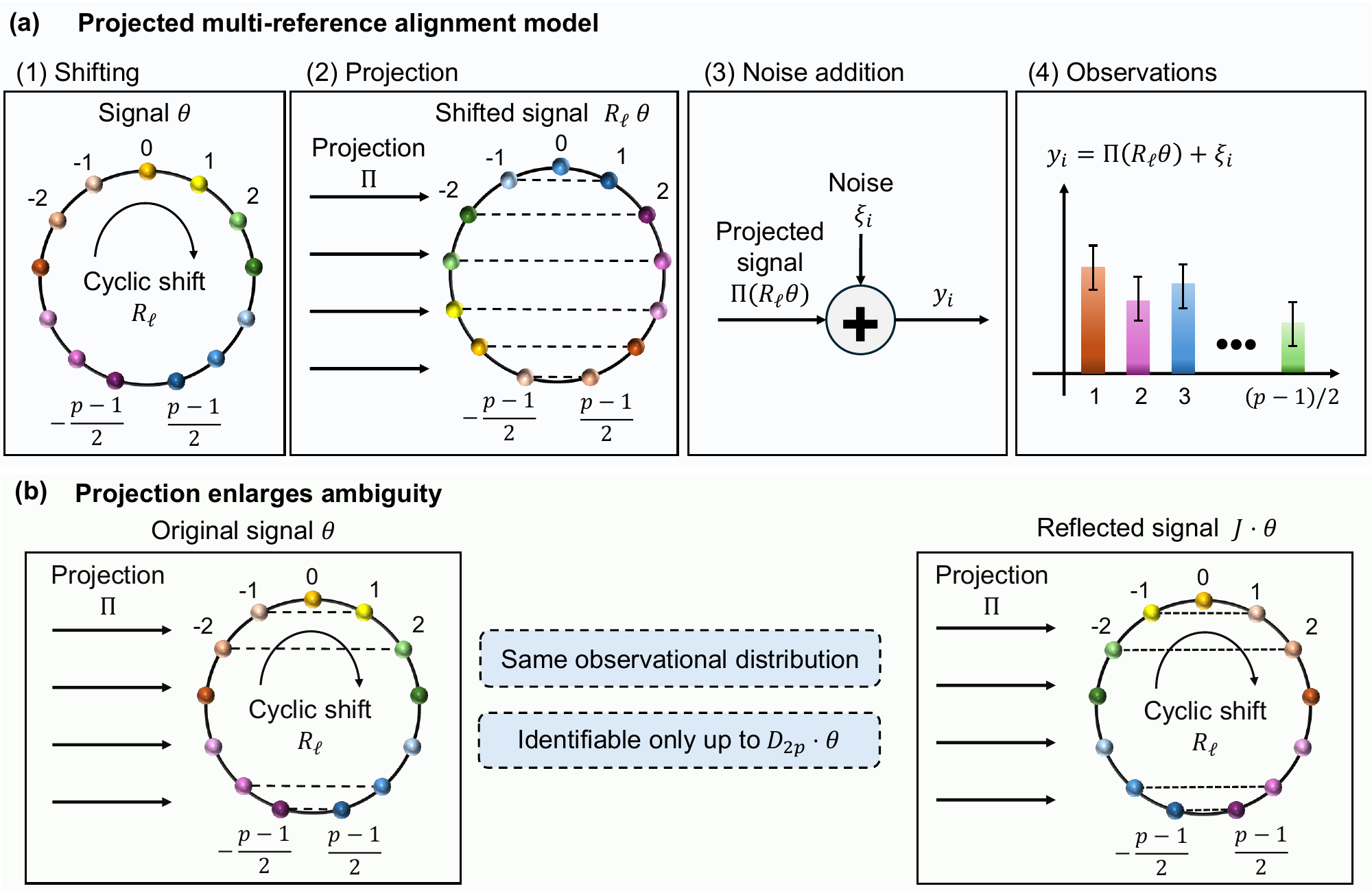}
    \caption{ \textbf{Projected multi-reference alignment (MRA) model and projection-induced ambiguity}. \textbf{(a)} Each observation is generated by applying an unknown cyclic shift $R_{\ell_i}$ to the signal $\theta$, projecting by $\Pi$~\eqref{eq:projection-map-model}, and adding noise: $y_i=\Pi(R_{\ell_i}\theta)+\xi_i$. For $p=2q+1$, the projection merges the reflection-symmetric index pairs $j$ and $-j$, yielding a $q$-dimensional projected observation. The vertical bars in the observation panel schematically represent the uncertainty induced by additive noise.
    \textbf{(b)} The projection masks orientation information: the original signal $\theta$ and its reflected version $J\theta$ generate the same projected observation distribution. Hence the signal is identifiable only up to the dihedral orbit $D_{2p}\cdot\theta$, rather than merely up to cyclic shift. }
    \label{fig:1}
\end{figure}

\paragraph{Moment-based recovery.}
In the high-noise regime, MRA-type problems are typically approached through low-order invariant moments of the observation distribution rather than by estimating the unknown group elements. In cyclic MRA, these invariants include the mean, power spectrum, and bispectrum~\cite{bandeira2023estimation,perry2019sample,bendory2017bispectrum,chen2018spectral}. The bispectrum carries Fourier phase information, and frequency-marching or bispectrum-inversion methods provide exact phase recovery in the noiseless population setting~\cite{bendory2017bispectrum,chen2018spectral}. More generally, the first moment order that determines the underlying orbit is closely tied to the sample complexity of MRA in the high-noise regime.

Within projected MRA, the central question is whether the low-order moment information that suffices for recovery in cyclic MRA is retained after projection. The associated third-moment identifiability problem was left open in~\cite{bandeira2023estimation}: prior work gave computational evidence and conjectured that moments up to degree three should generically restrict the solution set to finitely many candidates, but did not provide an identifiability proof or a constructive recovery algorithm. 

\paragraph{Connection with dihedral MRA.}
A central point of our analysis is the close relation between projected MRA and dihedral MRA~\cite{edidin2026reflection,bendory2022dihedral}. In dihedral MRA, observations are generated by applying random cyclic shifts and reflections to the unknown signal, so the recovery target is explicitly a dihedral orbit. Recent work on the reflection-invariant bispectrum~\cite{edidin2026reflection} shows that, for generic signals, degree-three dihedral invariants determine the dihedral orbit and analyzes the corresponding reflection-invariant phase equations.

The projected MRA model considered here is not the same statistical observation model. In dihedral MRA, one averages over automorphisms of the signal space, namely,  cyclic shifts and reflections. In projected MRA, by contrast, one averages over the rank-$q$ operators $\Pi R_\ell$, which first shift the signal and then map it to a lower-dimensional reflection-symmetric observation space. Nevertheless, we show that the low-order averages carry the same degree-three reflection-invariant information needed for dihedral recovery. 
Concretely, the projection combines the contribution of each Fourier mode with that of its reflected conjugate mode. As a result, the complex bispectral phase relations of cyclic MRA~\cite{bendory2017bispectrum} are replaced by cosine phase-coupling relations, exactly as in the reflection-invariant bispectrum for dihedral MRA. Thus, our main task is to prove that these degree-three dihedral phase relations are retained by the lower-dimensional projected observations.

\paragraph{Main results.}
The main purpose of the present paper is to prove that, for odd signal dimension $p=2q+1$, 
the first three population moments of the projected observations in~\eqref{eqn:main-projected-MRA} determine the underlying signal up to the intrinsic dihedral ambiguity, under a generic nondegeneracy condition. The proof is constructive and proceeds by identifying, within the projected model, the same reflection-invariant phase information that appears in dihedral MRA~\cite{edidin2026reflection}; see Figure~\ref{fig:2} for an illustration of the recovery pipeline.

Our approach has two steps. First, we rewrite the projected observations in a Fourier-cosine representation adapted to the reflection-symmetric projection; see Definition~\ref{def:projected-fourier-cosine-coefficients}. In this representation, the first moment determines the mean component, the second moment determines the Fourier magnitudes, and selected components of third moments yield explicit cosine phase-coupling relations among the Fourier phases. 

Second, we use these cosine relations to obtain an explicit recovery procedure. The chain equations propagate phase information from low to high frequencies, but each cosine equation introduces a binary sign ambiguity. Crucially, this ambiguity does not require an exhaustive search over all sign patterns. In contrast to the direct dihedral phase-recovery procedure in~\cite{edidin2026reflection}, where the sign ambiguities lead to an exponential branch search, our consistency relations allow the signs to be recovered sequentially, with linear rather than exponential complexity in the number of frequencies. After fixing a reflection normalization, the first consistency relation determines the initial sign triple, and each subsequent local consistency relation determines one additional sign. Thus, the branch-and-prune step terminates sequentially: the active normalized branch is pruned at each stage to a single admissible continuation, while the only remaining global ambiguity is the intrinsic reflection ambiguity of the dihedral recovery problem. Once the sign branch is identified, the remaining phase parameter is recovered analytically from a terminal chain relation. See Figure~\ref{fig:2}(b) and Algorithm~\ref{alg:projected-mra-constructive-recovery} for the sequential branch-and-prune procedure.

We summarize the main conclusion informally as follows. The formal theorem statement is Theorem~\ref{thm:projected-mra-anchored-recovery}.

\begin{thm}[Informal theorem]
For odd signal dimension $p=2q+1\ge 7$, and under a generic nondegeneracy condition, the first three population moments of the projected MRA model defined in~\eqref{eqn:main-projected-MRA} determine the underlying signal up to its inherent dihedral ambiguity. 
\end{thm}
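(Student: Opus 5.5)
The argument reduces the projected moments to the reflection-invariant Fourier data of the dihedral problem. Index $\mathbb{Z}/p$ by $\{-q,\dots,q\}$, so that $(\Pi x)_j = x_j + x_{-j}$ for $j=1,\dots,q$. Write the signal in Fourier form as $\theta_m = \frac{1}{p}\sum_k \hat\theta_k \omega^{km}$, with $\omega=e^{2\pi i/p}$, $\hat\theta_{-k}=\overline{\hat\theta_k}$, and $\hat\theta_k = a_k e^{i\phi_k}$. Then
\[
(\Pi R_\ell\theta)_j=\frac{2}{p}\sum_k \hat\theta_k\,\omega^{k\ell}\cos(2\pi kj/p).
\]
The $q\times q$ matrix $C_{jk}=\cos(2\pi jk/p)$ for $j,k=1,\dots,q$ is invertible, since it is a DCT-type matrix. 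Applying $C^{-1}$ to $y$, with an affine correction for the $k=0$ term, therefore gives Fourier-cosine coordinates
\[
z_k=\mathrm{Re}\bigl(\hat\theta_k\omega^{k\ell}\bigr)+\text{noise}=a_k\cos(\phi_k+2\pi k\ell/p)+\text{noise}.
\]
Because $C$ is known, the first three moments of $y$ determine the first three moments of $z$ (together with known noise corrections). I therefore work only with $z$.

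The moments of $z$ are computed by averaging over $\ell$, using $\frac1p\sum_\ell \omega^{s\ell}=\mathbf 1[s\equiv0]$.
\begin{itemize}
\item \textbf{First moment.} The constant coordinate gives $\hat\theta_0$. I must check how $\hat\theta_0$ enters $y$: it contributes $2\hat\theta_0/p$ to each coordinate, so it is determined by the mean. For $k\ge1$ the mean of $z_k$ vanishes.
\item \textbf{Second moment.} $\mathbb E z_kz_{k'}=\tfrac12 a_k^2\delta_{kk'}$ after removing the noise covariance, so the second moment gives all magnitudes $a_k$.
\item \textbf{Third moment.} For $k_1,k_2,k_3\ge1$, expand each cosine into exponentials. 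The only surviving terms are those with $\pm k_1\pm k_2\pm k_3\equiv0 \pmod p$. This gives $\mathbb E z_{k_1}z_{k_2}z_{k_3}$ as $\tfrac14 a_{k_1}a_{k_2}a_{k_3}$ times a sum of $\cos(\pm\phi_{k_1}\pm\phi_{k_2}\pm\phi_{k_3})$ over the admissible sign patterns.
\end{itemize}
For generic triples only one relation (up to global sign) holds. Examples are $k_1+k_2=k_3$, or wraparound relations such as $k_1+k_2+k_3=p$. In that case the moment yields $\cos(\phi_{k_1}+\phi_{k_2}-\phi_{k_3})$. These are exactly the reflection-invariant bispectrum equations of dihedral MRA, and dividing by the (generically nonzero) magnitudes isolates the cosines.

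The remaining task is to show that these cosine relations determine $(\phi_k)$ up to a shift $\phi_k\mapsto\phi_k+2\pi k\ell/p$ and a reflection $\phi\mapsto-\phi$. I would do this constructively by frequency marching.
\begin{itemize}
\item \textbf{Normalization.} Use the shift freedom to set $\phi_1=0$. Use the reflection freedom to fix the sign of $\sin\phi_2$.
\item \textbf{Start of the chain.} $\cos(2\phi_1-\phi_2)$ determines $\phi_2$, and $\cos(\phi_1+\phi_2-\phi_3)$ determines $\phi_3$ up to a sign.
\item \textbf{Propagation.} Each new frequency $\phi_{k+1}$ is given by $\cos(\phi_1+\phi_k-\phi_{k+1})$ up to a binary sign.
\item \textbf{Resolving signs.} A second, independent relation involving $\phi_{k+1}$ fixes the sign, for example $\cos(\phi_2+\phi_{k-1}-\phi_{k+1})$. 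For generic signals only one branch is consistent, because the competing branch would force a nontrivial algebraic equation on $\theta$, which is a proper Zariski-closed condition.
\item \textbf{Wraparound frequencies.} For $k$ near $q$, the relations of the form $k_1+k_2+k_3\equiv0 \pmod p$ are used to close the chain.
\end{itemize}
The hypothesis $p\ge7$ (so $q\ge3$) ensures that enough distinct relations exist to start the chain: there are at least two independent relations involving $\phi_3$.

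The hard step is proving that the sign pruning is unambiguous for generic $\theta$. At each stage one must exhibit a nonvanishing polynomial whose nonvanishing rules out the spurious branch. I would first try to write the competing branch as $\phi_{k+1}'=2(\phi_1+\phi_k)-\phi_{k+1}$. The check relation then fails unless $\sin(\phi_2+\phi_{k-1}-\phi_{k+1})\sin(\phi_1+\phi_k-\phi_2-\phi_{k-1})=0$, and this is a nontrivial condition that I verify fails at one explicit signal. Genericity then follows by a Zariski-openness argument.

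A second subtlety is the degenerate index collisions, where several sign patterns survive (for example $2k\equiv -k$ when $3\mid p$). In those cases I would use only noncolliding triples, and check that enough of them remain for $q\ge3$.
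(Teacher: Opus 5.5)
Your moment extraction (mean, magnitudes, cosine relations) and your consistency family $\cos(\phi_2+\phi_{k-1}-\phi_{k+1})$ match the paper. The proof breaks at the normalization step. The group $\mathbb{Z}/p$ acts by $\phi_k\mapsto\phi_k+2\pi k\ell/p$, so it moves $\phi_1$ only in steps of $2\pi/p$. You can place $\phi_1$ in $[0,2\pi/p)$, but you cannot set $\phi_1=0$. The leftover value of $\phi_1$ is genuine information that must be read off from the moments, and the non-wrapping relations cannot see it. Every relation with $a+b=c$ is invariant under the one-parameter family $\phi_k\mapsto\phi_k+kt$, $t\in\mathbb{R}$. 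This family preserves all magnitudes, but for generic $\theta$ it leaves $D_{2p}\cdot\theta$ unless $t\in\frac{2\pi}{p}\mathbb{Z}$. So if you fix $\phi_1=0$ and march with $\cos(\phi_1+\phi_k-\phi_{k+1})$, even the correct sign branch returns the phases $\phi_k-k\phi_1$ (up to reflection). That is a ``fractionally shifted'' signal outside the dihedral orbit. The wraparound relations you invoke ``to close the chain'' would then be violated, since $\phi_a+\phi_b+\phi_c$ with $a+b+c=p$ moves by $pt$. What you have sketched is the continuous $\mathrm{SO}(2)$ argument, where $\phi_1$ really is a gauge. The missing idea is to keep $\phi_1=\alpha$ as an unknown. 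The chain gives $\phi_k=k\alpha-\sum_{t<k}\varepsilon_t\beta_t$, and the non-wrapping checks such as $\cos(\phi_2+\phi_j-\phi_{j+2})$ do not involve $\alpha$ at all. The wrapping entry $M^{(3)}_{1,q,q}$, coming from $1+q+q=p$, gives the terminal equation $p\alpha\equiv 2\sum_{t<q}\varepsilon_t\beta_t+\varepsilon_q\beta_q \pmod{2\pi}$. Using it to eliminate $p\alpha$ makes the other wrapping checks, e.g.\ $M^{(3)}_{2,q-1,q}$, sign-only as well. So all signs, including $\varepsilon_q$, are pruned without knowing $\alpha$. The terminal equation then yields exactly $p$ values of $\alpha$, which are the cyclic shifts. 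This terminal step is what pins down the orbit in the discrete model; it is not a closing formality.

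Two smaller repairs are needed in your sign pruning. At the first step ($k+1=3$), your check $\cos(\phi_2+\phi_{k-1}-\phi_{k+1})$ is the chain relation $\cos(\phi_1+\phi_2-\phi_3)$ itself. For $q\ge5$, no other third-order relation involves only $\phi_1,\phi_2,\phi_3$, so the sign that determines $\phi_3$ cannot be decided on its own. It must be decided jointly with the next sign, as in the paper's base step, under a separation condition on the candidate sign triples. Use $M^{(3)}_{2,2,4}$, i.e.\ $\cos(2\phi_2-\phi_4)$, or, when $q=3$, $M^{(3)}_{2,2,3}$, i.e.\ $\cos(2\phi_2+\phi_3)$ after eliminating $p\alpha$. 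Second, your obstruction factor for the flipped branch is off. Write $u=\phi_2+\phi_{k-1}-\phi_{k+1}$ and $v=\phi_1+\phi_k-\phi_{k+1}$. Then $\cos u-\cos(u-2v)=-2\sin(u-v)\sin v$. So the first factor should be $\sin(\phi_1+\phi_k-\phi_{k+1})=\pm\sin\beta_k$, not $\sin(\phi_2+\phi_{k-1}-\phi_{k+1})$. With these corrections, your inductive step coincides with the paper's tail argument in Lemma~\ref{lem:sign-branch-elimination}.
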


As a direct consequence of Theorem~\ref{thm:projected-mra-anchored-recovery}, the projected MRA model falls under the general moment-based sample-complexity principle for orbit-recovery problems: in the high-noise regime, if the lowest-order moments that carry the relevant orbit information have order $d$, then the sample complexity scales as $\sigma^{2d}$~\cite{bandeira2023estimation}. In the projected MRA model studied here, the first moment determines only the mean component, the second moment determines the Fourier magnitudes, and the phase information needed for orbit recovery first appears at order three. Thus, the natural high-noise scaling is $n\gtrsim \sigma^6$; see Figure~\ref{fig:3}(b) for illustration of this scaling law.

\begin{corollary}[High-noise sample-complexity scaling]
\label{cor:projected-mra-sample-complexity}
Consider the projected MRA model~\eqref{eqn:main-projected-MRA}, with shifts drawn uniformly from $\mathbb{Z}/p$, and let $\theta$ satisfy the generic conditions of Theorem~\ref{thm:projected-mra-anchored-recovery}. Then consistent recovery by any estimator requires at least $n/\sigma^6\to\infty$ samples, while estimators based on moments up to
order three attain the same leading scaling.
\end{corollary}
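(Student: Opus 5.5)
The corollary has two halves, and I would prove them separately: a lower bound that holds for any estimator, and an upper bound achieved by moments up to order three. Both rest on the general moment-based framework of~\cite{bandeira2023estimation} for orbit recovery in the high-noise regime. That framework treats models of the form $y=\Pi(g\theta)+\xi$ with Gaussian noise $\xi\sim\mathcal N(0,\sigma^2 I_q)$ and $g$ drawn from a compact group, including projected group actions. Its key quantity is the index $d$: the smallest order such that the moment tensors $M_1(\theta),\dots,M_d(\theta)$ of $\Pi(R_\ell\theta)$, averaged over $\ell\sim\mathrm{Unif}(\mathbb Z/p)$, determine the orbit of $\theta$, here the dihedral orbit, up to finitely many candidates. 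The plan is therefore to show that $d=3$ for generic $\theta$ and then read off both directions from that framework.

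\textbf{Identifying $d$.} For the upper bound on $d$, Theorem~\ref{thm:projected-mra-anchored-recovery} already shows that $M_1,M_2,M_3$ determine the dihedral orbit of a generic $\theta$. For the lower bound, I would show that $M_1,M_2$ do not determine it. In the Fourier-cosine coordinates of Definition~\ref{def:projected-fourier-cosine-coefficients}, $M_1$ depends only on the mean component and $M_2$ only on the Fourier magnitudes $|\hat\theta_k|$. So any two signals with the same magnitudes but phases in different dihedral classes match in their first two moments. Generically there is a continuum of such signals, giving a positive-dimensional family of non-equivalent signals with $M_1,M_2$ equal, so $d\ge 3$.

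\textbf{Lower bound.} The information-theoretic result of~\cite{bandeira2023estimation} bounds the KL divergence between the observation laws for $\theta$ and a nearby $\theta'$ by
\begin{equation*}
D(P_\theta\|P_{\theta'})\lesssim \sigma^{-2d}\,\|M_d(\theta)-M_d(\theta')\|^2 ,
\end{equation*}
valid whenever $M_j(\theta)=M_j(\theta')$ for all $j<d$. I would take $\theta,\theta'$ from the family above, distinct dihedral orbits at fixed distance and agreeing in the first two moments. Then $n$ samples have total divergence $O(n/\sigma^6)$. If $n/\sigma^6$ stays bounded, Le Cam's two-point method shows that no estimator can distinguish the two orbits with vanishing error, so consistent recovery requires $n/\sigma^6\to\infty$.

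\textbf{Upper bound.} I would use the unbiased empirical estimators of $M_1,M_2,M_3$, obtained by subtracting the Gaussian noise bias terms (for example $\sigma^2 I$ for $M_2$, and terms linear in $M_1$ for $M_3$). The variance of the order-$j$ estimator is $O(\sigma^{2j}/n)$, so all three concentrate once $n/\sigma^6\to\infty$. The recovery map from moments to orbits is the branch-and-prune procedure of Algorithm~\ref{alg:projected-mra-constructive-recovery}. At a generic $\theta$ its sign decisions are strict inequalities, and the analytic phase solve is continuous, so the map is continuous there. Consistency then follows from the continuous mapping theorem.

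\textbf{Main obstacle.} I expect the hardest step to be checking that the framework of~\cite{bandeira2023estimation} applies with the projection $\Pi$ and the enlarged dihedral equivalence rather than the acting cyclic group. This needs two facts: that the projected moments of $\theta$ and $J\theta$ coincide, and that genericity is an open dense condition, so that small perturbations preserve the strict sign decisions. That paper treats projected models explicitly, so I would invoke its theorem directly.
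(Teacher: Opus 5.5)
Your proposal is correct and follows the same route as the paper: the corollary is read off from the $\sigma^{2d}$ principle of \cite{bandeira2023estimation} with $d=3$. Theorem~\ref{thm:projected-mra-anchored-recovery} shows that order three suffices, and the observation that the first two moments see only $\hat\theta[0]$ and the magnitudes $r_k$ shows that order three is necessary. Your explicit plug-in argument for the upper bound is a fine addition, with one change: the exact-equality tests in Algorithm~\ref{alg:projected-mra-constructive-recovery} must be replaced by nearest-candidate sign selection. That change is what makes the sign decisions locally constant, and hence the recovery map continuous, near the population moments of a generic $\theta$.
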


\paragraph{Organization of the paper.} 
The paper is organized as follows. In Section~\ref{sec:prelimiaries}, we introduce the projected MRA model, the Fourier-cosine representation adapted to the projection, and the low-order population moments used throughout the analysis. In Section~\ref{sec:projected-mra-third-moment-recovery}, we prove the main recovery theorem: we derive the relevant moment identities, formulate the sign-only branch-and-prune argument, and show that the third-order consistency constraints eliminate all spurious branches under the stated generic nondegeneracy assumptions. In Section~\ref{sec:finite-sample-analysis}, we study the finite-sample setting by comparing EM with direct moment and Fourier-cosine moment optimization. Section~\ref{sec:discussion} concludes with a discussion of the results and possible extensions.

\begin{figure}[t!]
    \centering
    \includegraphics[width=1.0\linewidth]{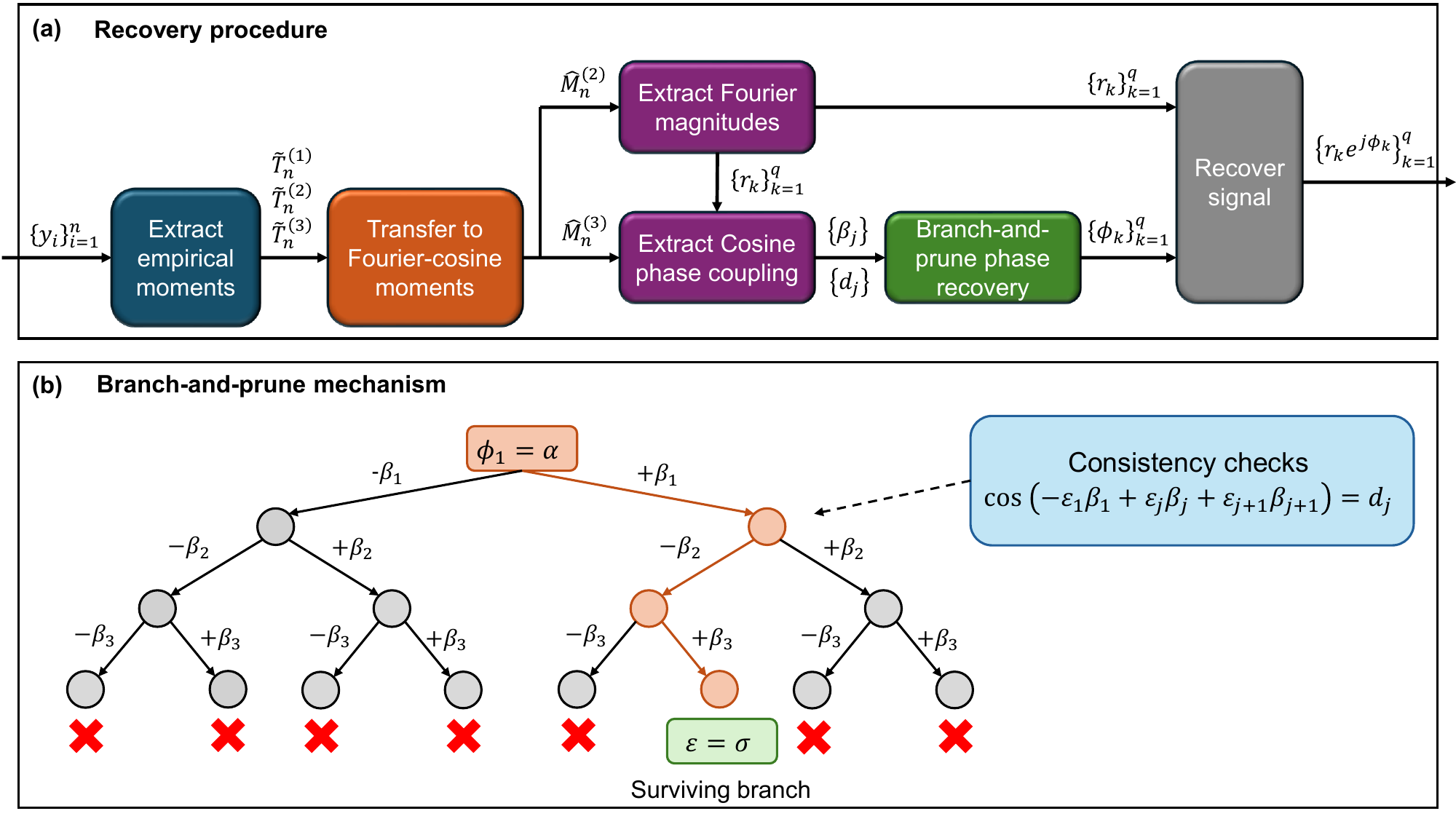}
    \caption{
    \textbf{Recovery pipeline and sequential branch-and-prune phase reconstruction for projected MRA.}
    \textbf{(a)} High-level recovery procedure. From the noisy observations $\{y_i\}_{i=1}^n$, we compute empirical moments and transfer them to Fourier-cosine moments by~\eqref{eq:transfer-projected-to-cosine-moments}. The second Fourier-cosine moment yields the Fourier magnitudes $\{r_k\}_{k=1}^q$ by Proposition~\ref{prop:second-moment-amplitudes}, while selected third Fourier-cosine moments yield the chain cosines $\{c_j\}_{j=1}^q$ in~\eqref{eq:cj}-\eqref{eq:cq} and the consistency cosines $\{d_j\},d_\ast$ in~\eqref{eq:dj}-\eqref{eq:dstar}. These quantities are used in the constructive recovery procedure of Algorithm~\ref{alg:projected-mra-constructive-recovery}.
    \textbf{(b)} Sequential branch-and-prune mechanism. The chain equations propagate the phases from an anchor $\phi_1=\alpha$, but each cosine relation introduces a binary sign ambiguity. The consistency relations then prune incompatible sign choices. After fixing the reflection normalization, e.g., by setting $\varepsilon_1=+1$, the first consistency relation determines the initial sign triple, and each subsequent local consistency relation determines one additional sign. Thus, in the population setting, the sign ambiguity is resolved sequentially rather than by an exhaustive search over $2^q$ branches; see Algorithm~\ref{alg:projected-mra-constructive-recovery}.
    }
\label{fig:2}
\end{figure}

\section{Preliminaries}
\label{sec:prelimiaries}
In this section, we introduce the projected MRA model, and define the low-order population moments that underlie the recovery result. We also derive a convenient Fourier representation of the projected orbit, which will allow us to pass from tensor moments in the observation space to explicit scalar relations in frequency space.

\subsection{The projected multi-reference alignment model}
We next formalize the model introduced in~\eqref{eqn:main-projected-MRA}.

\begin{problem}[Reflection-projected multi-reference alignment]
\label{prob:projected-mra}
Let $p\ge 3$ be odd, and set $q:=(p-1)/2$. We identify $\mathbb{R}^p$ with real-valued functions on $\mathbb{Z}/p$, indexed by the 
representative set $\{0,\pm1,\ldots,\pm q\}$. For each $\ell\in\mathbb{Z}/p$, let $R_\ell:\mathbb{R}^p\to\mathbb{R}^p$ denote the cyclic-shift operator
\begin{align}
    (R_\ell \theta)[m] = \theta[m-\ell],
    \qquad m\in\mathbb{Z}/p,
    \label{eq:cyclic-shift-operator}
\end{align}
where indices are understood modulo $p$. Define the reflection-symmetric
projection $\Pi:\mathbb{R}^p\to\mathbb{R}^q$ by
\begin{align}
    (\Pi v)[j] = v[j]+v[-j],
    \qquad j=1,\ldots,q.
    \label{eq:projection-map-model}
\end{align}
Given an unknown signal $\theta\in\mathbb{R}^p$, we observe independent samples $y_1,\ldots,y_n\in\mathbb{R}^q$ of the form
\begin{align}
    y_i=\Pi(R_{\ell_i}\theta)+\xi_i,
    \qquad i=1,\ldots,n,
    \label{eq:projected-mra-sample}
\end{align}
where $\ell_i\sim\mathrm{Unif}(\mathbb{Z}/p)$ are independent and
$\xi_i\in\mathbb{R}^q$ are additive noise variables. 
The \emph{projected MRA problem} is to infer $\theta$ from $y_1,\ldots,y_n$ up to the intrinsic dihedral symmetry generated by cyclic shifts and reflection, namely, to recover the orbit $D_{2p}\cdot\theta$.
\end{problem}

We formulate the model for odd signal length $p=2q+1$, as in the original projected MRA formulation of~\cite{bandeira2023estimation}. This assumption avoids the degeneracy that appears in the even-dimensional analog. When $p$ is odd, the nonzero indices of $\mathbb{Z}/p$ split into distinct reflection pairs $\{j,-j\}$, $j=1,\ldots,q$. By contrast, when $p=2q$ is even, $\theta$ and $\theta+(c,-c,c,-c,\ldots,c,-c)$ induce the same observation law for every scalar $c$. Hence, the even-dimensional model is not well posed as a problem of recovering the full dihedral orbit of $\theta$. In the frequency domain, this obstruction corresponds to the Nyquist frequency $q=p/2$, which satisfies $q\equiv -q \pmod p$ and is therefore fixed by reflection. Equivalently, in the time domain, this frequency is represented by the alternating mode $(1,-1,1,-1,\ldots)$, which is invisible to the projected shifted observations.
In Section~\ref{subsec:continuous-so2-extension}, we show that the same projection-induced dihedral ambiguity arises in the continuous $\mathrm{SO}(2)$ formulation. In the continuous setting, the nonzero Fourier modes occur in distinct pairs $k$ and $-k$, and there is no Nyquist frequency. Thus, the odd-dimensional discrete model is the natural finite analogue of the continuous projected $\mathrm{SO}(2)$ setting.
\subsection{Projected cosine representation}
For each $\ell\in\mathbb{Z}/p$, let
$X_\ell := \Pi(R_\ell\theta)\in\mathbb{R}^q$ denote the noiseless projected
observation. By~\eqref{eq:projection-map-model},
\begin{align}
    X_\ell[j] = \theta[j-\ell]+\theta[-j-\ell],
    \qquad j=1,\dots,q,
    \label{eq:projected-sample-entry}
\end{align}
with all indices understood modulo $p$.

Throughout, we use the unitary discrete Fourier transform (DFT) convention
\begin{align}
    \hat{x}[k] = \frac{1}{\sqrt{p}} \sum_{n=0}^{p-1} x[n]e^{-2\pi i kn/p},
    \qquad
    x[n] = \frac{1}{\sqrt{p}} \sum_{k=0}^{p-1} \hat{x}[k]e^{2\pi i kn/p}.
    \label{eq:unitary-dft-idft}
\end{align}
For real-valued $x$, $\hat{x}[-k]=\overline{\hat{x}[k]}$. Substituting the
inverse DFT into~\eqref{eq:projected-sample-entry} gives
\begin{align}
    X_\ell[j]
    &=
    \frac{1}{\sqrt{p}} \sum_{k=0}^{p-1} \hat{\theta}[k] \Bigl(e^{2\pi i k(j-\ell)/p} + e^{-2\pi i k(j+\ell)/p}\Bigr) \notag\\
    &=
    \frac{2}{\sqrt{p}} \sum_{k=0}^{p-1} \hat{\theta}[k] e^{-2\pi i k\ell/p} \cos\Bigl(\frac{2\pi kj}{p}\Bigr) \notag\\
    &=
    \frac{2}{\sqrt{p}}\,\hat{\theta}[0] + \frac{2}{\sqrt{p}}\sum_{k=1}^{q} \Bigl( \hat{\theta}[k]e^{-2\pi i k\ell/p} + \hat{\theta}[-k]e^{2\pi i k\ell/p} \Bigr) \cos\Bigl(\frac{2\pi kj}{p}\Bigr) \notag\\
    &=
    \frac{2}{\sqrt{p}}\,\hat{\theta}[0] + \frac{4}{\sqrt{p}}\sum_{k=1}^{q} \Re\!\Bigl\{ \hat{\theta}[k]e^{-2\pi i k\ell/p} \Bigr\} \cos\Bigl(\frac{2\pi kj}{p}\Bigr),
    \qquad j=1,\dots,q.
    \label{eq:Xl-entry-fourier}
\end{align}

\begin{definition}[Projected Fourier-cosine coefficients]
\label{def:projected-fourier-cosine-coefficients}
Let $p=2q+1$ be odd, and let $\theta\in\mathbb{R}^p$. For each shift $\ell\in\mathbb{Z}/p$ and each non-zero frequency $k=1,\dots,q$, define the \emph{projected Fourier-cosine coefficient}
\begin{align}
    C_\ell[k] &:= \hat{\theta}[k]e^{-2\pi i k\ell/p} + \hat{\theta}[-k]e^{2\pi i k\ell/p}
    \label{eq:Cl-definition}\\
    &= 2\,\Re\!\Bigl\{\hat{\theta}[k]e^{-2\pi i k\ell/p}\Bigr\}.
    \label{eq:Cl-real}
\end{align}
\end{definition}
With this definition
we 
re-write~\eqref{eq:Xl-entry-fourier} in matrix form. 

\begin{align}
    X_\ell = \frac{2}{\sqrt{p}}\,\hat{\theta}[0]\mathbf{1} + \frac{1}{\sqrt{p}}\,A\,C_\ell,
   \label{eq:Xl-reduced-matrix-form}
\end{align}
where $A \in \mathbb{R}^{q \times q}$ is the cosine matrix
\begin{equation} 
A = (A)_{jk} := 2\cos\Bigl(\frac{2\pi jk}{p}\Bigr) \label{eq:cosine-matrix}
\end{equation}
and $\mathbf{1}\in\mathbb{R}^q$ denotes the all-ones vector.

Equation~\eqref{eq:Xl-reduced-matrix-form} shows that, for each $\ell$, the projected sample $X_\ell\in\mathbb{R}^q$ is obtained from the projected Fourier-cosine coefficient vector $C_\ell\in\mathbb{R}^q$ by an affine change of coordinates. In particular, once $\hat{\theta}[0]$ is known, the vectors $X_\ell$ and $C_\ell$ carry equivalent information provided that $A$ is invertible, as proved next.

\begin{lem}[Invertibility and conditioning of the cosine matrix]
\label{lem:cosine-matrix-invertible}
Let $p=2q+1$ be odd, and let $A\in\mathbb{R}^{q\times q}$ be defined by~\eqref{eq:cosine-matrix}. Then:

\begin{enumerate}
    \item The matrix $A$ satisfies
    \begin{align}
        A^\top A = p I_q - 2\mathbf{1}\mathbf{1}^\top ,    \label{eq:cosine-matrix-gram}
    \end{align}
    and in particular $A$ is invertible. 
    
    \item For $q\ge 2$,  $\kappa_{2}(A) = \sqrt{p}$ where $\kappa_2(A)$ denotes the ratio of the largest to smallest singular value.
\end{enumerate}
   
\end{lem}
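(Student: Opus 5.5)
The plan is to compute the Gram matrix entrywise using the standard character-sum identity on $\mathbb{Z}/p$, then read off the spectrum. Since $A$ is symmetric, $A^\top A = A^2$, and
\begin{align*}
(A^\top A)_{jk} = \sum_{m=1}^{q} 4\cos\Bigl(\tfrac{2\pi jm}{p}\Bigr)\cos\Bigl(\tfrac{2\pi km}{p}\Bigr)
= \sum_{m=1}^{q} \Bigl[2\cos\Bigl(\tfrac{2\pi (j-k)m}{p}\Bigr) + 2\cos\Bigl(\tfrac{2\pi (j+k)m}{p}\Bigr)\Bigr].
\end{align*}
The key identity is that for any integer $s$,
\begin{align*}
\sum_{m=1}^{q} 2\cos\Bigl(\tfrac{2\pi s m}{p}\Bigr) = \sum_{m=-q,\,m\neq 0}^{q} e^{2\pi i s m/p} = p\,\mathbb{1}[s\equiv 0 \ (\mathrm{mod}\ p)] - 1,
\end{align*}
because the full sum over $m\in\mathbb{Z}/p$ equals $p$ or $0$, and we subtract the $m=0$ term.

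Next I apply this with $s=j-k$ and $s=j+k$ for $1\le j,k\le q$. For these indices, $j+k\in\{2,\dots,2q\}$ is never divisible by $p=2q+1$, and $j-k\equiv 0$ exactly when $j=k$. Hence $(A^\top A)_{jk} = p\,\delta_{jk} - 2$, which is~\eqref{eq:cosine-matrix-gram}. For invertibility, I note that the eigenvalues of $pI_q - 2\mathbf{1}\mathbf{1}^\top$ are $p-2q = 1$ (on $\mathbf{1}$) and $p$ (with multiplicity $q-1$, on $\mathbf{1}^\perp$). All eigenvalues are positive, so $A$ is invertible; indeed $\det(A)^2 = p^{q-1}$.

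For part 2, the singular values of $A$ are the square roots of the eigenvalues of $A^\top A$, namely $1$ (once) and $\sqrt{p}$ (with multiplicity $q-1$). When $q\ge 2$ the value $\sqrt{p}$ actually occurs, so $\sigma_{\max}=\sqrt{p}$ and $\sigma_{\min}=1$, giving $\kappa_2(A)=\sqrt{p}$. When $q=1$, only the value $1$ occurs, which explains the restriction.

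There is no real obstacle. The only care needed is the bookkeeping verifying that $j+k\not\equiv 0 \pmod p$ throughout the index range, since that is precisely where oddness of $p$ enters.
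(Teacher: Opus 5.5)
Your proposal is correct and follows essentially the same route as the paper: the product-to-sum expansion of $(A^\top A)_{jk}$, the roots-of-unity evaluation of $\sum_{m=1}^q 2\cos(2\pi sm/p)$ (which you write as $p-1$ or $-1$ according to whether $s\equiv 0 \pmod p$, and the paper writes as $-1/2$ for the single sum when $s\not\equiv 0$), and then reading off the eigenvalues $1$ and $p$ of $pI_q-2\mathbf{1}\mathbf{1}^\top$ to get the singular values and $\kappa_2(A)=\sqrt{p}$. Your extra remarks are correct refinements within that same argument: the symmetry of $A$, the identity $\det(A)^2=p^{q-1}$, and the observation that $q\ge 2$ is needed so that the singular value $\sqrt{p}$ actually occurs.
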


\begin{proof}[Proof of Lemma~\ref{lem:cosine-matrix-invertible}]
For $1\le k,m\le q$, using $2\cos a\cos b=\cos(a-b)+\cos(a+b)$, we have
\begin{align}
    (A^\top A)_{km}
    &=
    4\sum_{j=1}^q \cos\!\left(\frac{2\pi jk}{p}\right) \cos\!\left(\frac{2\pi jm}{p}\right) \notag\\
    &=
    2\sum_{j=1}^q \cos\!\left(\frac{2\pi j(k-m)}{p}\right) + 2\sum_{j=1}^q \cos\!\left(\frac{2\pi j(k+m)}{p}\right).
    \label{eq:ATA-cosine-sums}
\end{align}
Since $p=2q+1$, the roots-of-unity identity gives, for every
$r\not\equiv 0 \pmod p$,
\begin{align}
    1+2\sum_{j=1}^q
    \cos\!\left(\frac{2\pi jr}{p}\right) = 0,
    \qquad\text{and hence}\qquad
    \sum_{j=1}^q \cos\!\left(\frac{2\pi jr}{p}\right) = -\frac{1}{2}.
    \label{eq:nonzero-cosine-sum}
\end{align}

\emph{Diagonal terms.}
If $k=m$, then the first sum in~\eqref{eq:ATA-cosine-sums} equals $q$, while $k+m=2k\not\equiv0\pmod p$, so the second sum equals $-1/2$ by~\eqref{eq:nonzero-cosine-sum}. Therefore
\begin{align}
    (A^\top A)_{kk} = 2q-1 = p-2.
    \label{eq:ATA-diagonal-entry}
\end{align}

\emph{Off-diagonal terms.}
If $k\ne m$, then both $k-m$ and $k+m$ are nonzero modulo $p$, so both sums in~\eqref{eq:ATA-cosine-sums} equal $-1/2$ by~\eqref{eq:nonzero-cosine-sum}. Hence
\begin{align}
    (A^\top A)_{km} = -2,
    \qquad k\ne m.
    \label{eq:ATA-off-diagonal-entry}
\end{align}
Combining~\eqref{eq:ATA-diagonal-entry} and~\eqref{eq:ATA-off-diagonal-entry}, we obtain
\begin{align}
    A^\top A = pI_q-2\mathbf{1}\mathbf{1}^\top .
    \label{eq:ATA-gram-final}
\end{align}

The matrix $\mathbf{1}\mathbf{1}^\top$ has rank one and the non-zero
eigenvalue is $q$ since this is the trace of the matrix.
Hence $A^\top A = pI_q - 2\mathbf{1} \mathbf{1}^\top$
has eigenvalues $p-2q=1$ and $p$. Hence the singular values of $A$ are $1$ and $\sqrt p$, and therefore $\kappa_2(A)=\sqrt p$.
\end{proof}

Thus, the change of variables from projected samples to Fourier-cosine coefficients is invertible, with spectral condition number growing as $\sqrt p$. Although this conditioning is mild as a function of the signal dimension, it can have a visible finite-sample effect when empirical moments are transformed to Fourier-cosine coordinates. In Section~\ref{sec:finite-sample-analysis}, we quantify this effect numerically: applying $A^{-1}$ in each tensor mode amplifies sampling fluctuations, which affects the finite-sample behavior of algorithms that operate directly in the Fourier-cosine moment coordinates.

\subsection{Population moments}

The recovery result is formulated in terms of low-order moments of the noiseless projected orbit
$X_\ell=\Pi(R_\ell\theta)$.
We distinguish between two equivalent representations. The first consists of the \emph{moments of the projected MRA model} 
defined directly from $X_\ell\in\mathbb{R}^q$. The second consists of the \emph{Fourier-cosine population moments}, defined from the projected Fourier-cosine coefficient vector $C_\ell\in\mathbb{R}^q$. These two representations are related by the affine change of coordinates~\eqref{eq:Xl-reduced-matrix-form}. Since the zero Fourier mode is determined by the first projected moment and the matrix $A$ is invertible by Lemma~\ref{lem:cosine-matrix-invertible}, the two moment representations contain the same information. The Fourier-cosine representation is the one used for the population level recovery: its second moment gives the Fourier magnitudes, and selected entries of the third moments give the cosine phase-coupling relations.

\begin{definition}[Moments of the projected population]
\label{def:population-moments}
Let $\theta\in \mathbb{R}^p$, and let $\ell\sim \mathrm{Unif}(\mathbb{Z}/p)$. Recall the noiseless projected sample $X_\ell$ from~\eqref{eq:projected-sample-entry}. For each integer $d\ge 1$, the $d$-th \emph{projected population moment} is
defined as follows:
\begin{align}
    T^{(d)}(\theta) := \mathbb{E}\bigl(\Pi(R_\ell\theta)\bigr)^{\otimes{d}}  =\frac{1}{p}\sum_{\ell=0}^{p-1} X_\ell^{\otimes{d}}.
    \label{eq:population-moment-definition}
\end{align}
\end{definition}
The tensor $T^{(d)}(\theta)$ decomposes as a sum of pure tensors
\begin{align}
T^{(d)}(\theta) = \sum_{(k_1, \ldots, k_d)} T^{(d)}_{k_1, \ldots k_d}(\theta) e_{k_1} \otimes \ldots \otimes e_{k_d},
\end{align}
where 
\begin{equation} \label{eq:tdk}
T^{(d)}_{k_1, \ldots k_d}(\theta) = {1\over p}\sum_{\ell=0}^{p-1}\prod_{t=1}^dX_\ell[k_t].
\end{equation}
\begin{definition}[Fourier-cosine moments]
\label{def:cosine-moments}
Let $C_\ell=(C_\ell[1],\dots,C_\ell[q])^\top\in\mathbb{R}^q$ denote the projected Fourier-cosine coefficient vector defined in~\eqref{eq:Cl-definition}-\eqref{eq:Cl-real}. For $d\ge 1$ and $k_1,\dots,k_d\in\{1,\dots,q\}$, define the corresponding $d$-th \emph{Fourier-cosine population moment} by
\begin{align}
    M^{(d)}(\theta) :=  {1\over p} \sum_{\ell=0}^{p-1} C_\ell^{\otimes d}
    \label{eq:cosine-population-moment-definition}
\end{align}
\end{definition}
Once again, tensor $M^{(d)}(\theta)$ decomposes as a sum of pure tensors
\begin{align}
M^{(d)}(\theta) = \sum_{(k_1, \ldots, k_d)} M^{(d)}_{k_1, \ldots k_d}(\theta) e_{k_1} \otimes \ldots \otimes e_{k_d},
\end{align}
where 
\begin{equation} \label{eq:mdk}
M^{(d)}_{k_1, \ldots k_d}(\theta) = {1\over p}\sum_{\ell=0}^{p-1}\prod_{t=1}^d C_\ell[k_t].
\end{equation}

\paragraph{Transfer to the Fourier-cosine representation.}
The affine relation~\eqref{eq:Xl-reduced-matrix-form} gives
\begin{align}
    C_\ell = \sqrt{p}\,A^{-1} \left( X_\ell-\frac{2}{\sqrt p}\hat\theta[0]\mathbf{1} \right).
    \label{eq:C-from-X-affine}
\end{align}
Moreover, $\frac{1}{p}\sum_{\ell=0}^{p-1} X_\ell = \frac{2}{\sqrt p}\hat\theta[0]\mathbf{1}$, so $\hat\theta[0]$ is determined by $T^{(1)}(\theta)$. Thus, it follows from Lemma~\ref{lem:cosine-matrix-invertible} that the Fourier-cosine moments are obtained from the moments of the projected MRA observation by centering and applying the invertible linear map $\sqrt p\,A^{-1}$.

Concretely, define the centered moments
\begin{align}
    \bar{T}^{(d)}(\theta) := \mathbb{E}\big[(X_\ell-\mathbb{E}[X_\ell])^{\otimes{d}}\big].
    \label{eq:centered-projected-moments}
\end{align}
The tensor $\bar{T}^{(d)}(\theta)$ is determined by the  moments $T^{(1)}(\theta),\ldots,T^{(d)}(\theta)$. Applying $\sqrt p\,A^{-1}$ along each tensor mode gives
\begin{align}
    M^{(d)}_{k_1,\ldots,k_d}(\theta) = p^{d/2} \sum_{j_1,\ldots,j_d=1}^q (A^{-1})_{k_1 j_1}\cdots (A^{-1})_{k_d j_d} \bar{T}^{(d)}_{j_1,\ldots,j_d}(\theta).
    \label{eq:transfer-projected-to-cosine-moments}
\end{align}
Equivalently, $M^{(d)}(\theta) = p^{d/2}(A^{-1})^{\otimes d}\bar{T}^{(d)}(\theta)$, where $A^{-1}$ is applied once to each tensor mode. Thus, projected population moments up to order $d$ and Fourier-cosine population moments up to order $d$ are equivalent, up to centering by the first moment and the invertible change of coordinates induced by $A$.

\section{Main results}
\label{sec:projected-mra-third-moment-recovery}

In this section we prove that the first three population moments of the projected MRA model determine the underlying signal, under explicit generic nondegeneracy conditions, up to its inherent dihedral symmetry. 

\begin{thm}[Orbit recovery from the first three moments]
\label{thm:projected-mra-anchored-recovery}
Let $p=2q+1\ge 7$ be odd, and let $\theta\in\mathbb{R}^p$. Assume that $\widehat{\theta}[k]\neq 0$ for every $k=1,\ldots,q$, and that Assumption~\ref{ass:generic-sign-separation} holds. Then, the population moments $T^{(1)}(\theta), T^{(2)}(\theta), T^{(3)}(\theta)$ determine the dihedral orbit $D_{2p}\cdot\theta$, namely, if $\theta'\in\mathbb{R}^p$ satisfies $T^{(d)}(\theta')=T^{(d)}(\theta)$ for $d=1,2,3$, then $\theta'\in D_{2p}\cdot\theta$.
\end{thm}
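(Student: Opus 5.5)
The plan is to pass to the Fourier-cosine moments via~\eqref{eq:transfer-projected-to-cosine-moments}, compute $M^{(2)}$ and $M^{(3)}$ explicitly in terms of the polar data $\widehat\theta[k]=r_ke^{i\phi_k}$, and then show that the resulting scalar equations pin down $(r_k,\phi_k)_{k=1}^q$ up to the dihedral action. By the transfer formula, $T^{(1)},T^{(2)},T^{(3)}$ determine $\widehat\theta[0]$ and $M^{(2)},M^{(3)}$. So it suffices to show that $\theta$ and $\theta'$ with equal $\widehat\theta[0]$, $M^{(2)}$ and $M^{(3)}$ lie in one orbit. Write $C_\ell[k]=2r_k\cos(\phi_k-2\pi k\ell/p)$. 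The cyclic shift $R_s$ sends $\phi_k\mapsto\phi_k-2\pi ks/p$, and the reflection sends $\phi_k\mapsto-\phi_k$. It therefore suffices to recover the phases modulo these two operations.

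\emph{Moment identities.} Expand each product of cosines into exponentials and average over $\ell$. A term survives only if the signed frequency sum vanishes mod $p$. For $1\le k,m\le q$ this happens only when $k=m$, which gives $M^{(2)}_{km}=2r_k^2\delta_{km}$; this recovers the magnitudes $r_k>0$. For $1\le k,m,n\le q$, the signed sum $\pm k\pm m\pm n$ vanishes mod $p$ only if one index is the sum of the other two, or if $k+m+n=p$. In those cases
\[
M^{(3)}_{kmn}=2r_kr_mr_n\,\cos(\text{corresponding phase combination}).
\]
There is an extra factor of two or so when indices coincide, which is harmless. Since $r_k\ne0$, dividing by $r_kr_mr_n$ gives the following cosines:
\begin{itemize}
\item the chain cosines $c_j=\cos(\phi_1+\phi_j-\phi_{j+1})$ for $j=1,\dots,q-1$, from the entries $(1,j,j+1)$;
\item the terminal cosine $\cos(\phi_1+2\phi_q)$, from $(1,q,q)$, since $1+2q=p$;
\item the consistency cosines $d_j=\cos(\phi_2+\phi_j-\phi_{j+2})$, from $(2,j,j+2)$.
\end{itemize}
The hypothesis $p\ge7$, i.e.\ $q\ge3$, guarantees that the consistency triples exist and are distinct from the chain triples.

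\emph{Reconstruction.} Set $a_j=\arccos c_j\in[0,\pi]$ and write $\phi_1+\phi_j-\phi_{j+1}=\varepsilon_ja_j$ with signs $\varepsilon_j\in\{\pm1\}$. Induction gives
\[
\phi_j=j\phi_1-\sum_{i<j}\varepsilon_ia_i .
\]
Substituting into the consistency relations gives
\[
d_j=\cos(\varepsilon_ja_j+\varepsilon_{j+1}a_{j+1}-\varepsilon_1a_1),
\]
which does not depend on $\phi_1$. The reflection acts by flipping all signs simultaneously, so I normalize $\varepsilon_1=+1$. The relation for $d_1$ then selects $(\varepsilon_1,\varepsilon_2)$. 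Each subsequent $d_j$ determines $\varepsilon_{j+1}$ from the already known $\varepsilon_j$. The terminal relation then reads
\[
\cos\Bigl(p\phi_1-2\sum_{i<q}\varepsilon_ia_i\Bigr)=\text{known}.
\]
This determines $p\phi_1$ mod $2\pi$ up to a sign, hence $\phi_1$ up to an additive multiple of $2\pi/p$, which is exactly the cyclic-shift ambiguity. The residual sign should be absorbed by the global reflection. Here I must check carefully that flipping this sign together with all $\varepsilon$'s is precisely $\phi\mapsto-\phi$; if the sign is not covered by that flip, a further consistency relation, e.g.\ one with $k+m+n=p$, must be used. The phases, and therefore $\theta$, are then determined up to $D_{2p}$.

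\emph{Main obstacle.} The delicate step is the sign pruning. One must show that, at each stage, the wrong choice of $\varepsilon_{j+1}$ yields a value of $d_j$ different from the true one. Equivalently, $\cos(x+a_{j+1})\ne\cos(x-a_{j+1})$, i.e.\ $\sin x\,\sin a_{j+1}\ne0$. This is exactly what Assumption~\ref{ass:generic-sign-separation} is meant to provide: it excludes degenerate values where $a_j\in\{0,\pi\}$ or the relevant partial phase combinations are multiples of $\pi$. I would state the needed nonvanishing conditions explicitly as the nonvanishing of finitely many real-analytic functions of $\theta$, which hold off a measure-zero set. For the initial triple the same argument applies, with care taken that the relation for $d_1$ also resolves $\varepsilon_2$ relative to $\varepsilon_1$.
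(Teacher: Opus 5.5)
Your overall route is the paper's: transfer to $M^{(d)}$, read magnitudes off $M^{(2)}$, propagate phases with the chain cosines, and use the terminal chain equation to get $\phi_1$ modulo $2\pi/p$. Your tail step ($\sin x\,\sin a_{j+1}\neq0$) is also exactly the paper's argument. The sign resolution, however, fails at both ends.

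First, the starting step. The index triple $(2,1,3)$ is a permutation of the chain triple $(1,2,3)$, so $d_1=c_2$. Your own formula confirms this: $d_1=\cos(\varepsilon_1a_1+\varepsilon_2a_2-\varepsilon_1a_1)=\cos a_2$ for either value of $\varepsilon_2$. Hence $d_1$ selects nothing, $\varepsilon_2$ is never fixed, and the sequential step ``$d_j$ determines $\varepsilon_{j+1}$ from the known $\varepsilon_j$'' cannot start. You need a genuinely three-sign relation instead: $d_2=\cos(-\varepsilon_1a_1+\varepsilon_2a_2+\varepsilon_3a_3)$, from the entry $(2,2,4)$, fixes $(\varepsilon_2,\varepsilon_3)$ jointly once $\varepsilon_1$ is normalized. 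Because one cosine must now separate four candidate pairs, this requires the base-separation condition \eqref{eq:base-separation}, not a single condition of the form $\sin x\,\sin a\neq0$. Moreover, for $p=7$ the family $(2,j,j+2)$ consists only of $(2,1,3)$. So your claim that $q\ge3$ supplies consistency triples distinct from the chain triples is false there.

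Second, the terminal sign cannot be ``absorbed by the global reflection.'' That reflection was already spent normalizing $\varepsilon_1$, and your $d_j$ (which need $j+2\le q$) only reach $\varepsilon_{q-1}$. Choosing the other root of $\cos(p\phi_1-2\sum_{i<q}\varepsilon_ia_i)=\cos a_q$ replaces $\phi_k$ by $\phi_k-2k\varepsilon_qa_q/p$, up to a cyclic shift. This fractional translation leaves unchanged the magnitudes, all chain cosines with $j<q$, and every non-wrapping entry ($a+b=c$), since these see the phases only through combinations invariant under $\phi_k\mapsto\phi_k+k\delta$. Yet it is not in $D_{2p}\cdot\theta$. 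A pure shift would force $a_q\in\pi\mathbb{Z}$. A reflection composed with a shift would force $2\phi_k\equiv k\gamma \pmod{2\pi}$ for some $\gamma$ and all $k$, which puts every chain angle in $\pi\mathbb{Z}$. So a wrapping consistency entry is indispensable, not a contingency.

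The paper uses the entry $(2,q-1,q)$, where $2+(q-1)+q=p$. Eliminating $\phi_1$ with the terminal chain equation gives $d_\ast=\cos(-\varepsilon_1a_1+\varepsilon_{q-1}a_{q-1}+\varepsilon_qa_q)$ as in \eqref{eq:dstar}, which fixes $\varepsilon_q$ provided $\sin(-\sigma_1a_1+\sigma_{q-1}a_{q-1})\neq0$. For $q=3$, this same entry $(2,2,3)$ is what determines the whole initial triple. Only after all $q$ signs are fixed does the terminal chain equation leave nothing but the $p$ cyclic shifts.
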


We prove the theorem in Section~\ref{sec:proof_thm_projected-mra-anchored-recovery}. The argument proceeds in three steps. First, we extract from the first three moments the mean component, the nonzero Fourier magnitudes, and a family of explicit third-order cosine phase-coupling relations. These relations are the projected-MRA analogue of the degree-three dihedral bispectral equations~\cite{edidin2026reflection}. Second, we organize a selected subset of these relations into a frequency-marching scheme, which propagates the Fourier phases from low to high frequencies up to sign choices. Finally, we use additional third-order consistency constraints to eliminate all incompatible sign branches, leaving only the unavoidable global reflection ambiguity.

\subsection{From  moments to dihedral phase relations}

Throughout, we work in the projected Fourier-cosine representation introduced in~\eqref{eq:cosine-population-moment-definition}. 
For $k=1,\dots,q$, write the nonzero Fourier coefficients of the unknown signal $\theta$ in polar form as
\begin{align}
    \hat{\theta}[k] = r_k e^{i\phi_k},
    \qquad k=1,\dots,q,
    \label{eq:polar-form}
\end{align}
where $r_k:=|\hat{\theta}[k]|$ is the magnitude of the $k$-th Fourier coefficient and $\phi_k\in\mathbb{R}$ is its phase. In the recovery theorem below, we assume $r_k>0$ for $k=1,\dots,q$, so that the normalized phase relations are well defined. 

\paragraph{First moment.}
We begin with the first moment. By the definition of the projected population moment in~\eqref{eq:population-moment-definition} and the Fourier representation of $X_\ell$ in~\eqref{eq:Xl-entry-fourier}, averaging over $\ell\in\mathbb{Z}/p$ eliminates all nonzero Fourier modes. Hence, 
\begin{align}
    T^{(1)}(\theta) = \frac{1}{p}\sum_{\ell=0}^{p-1} X_\ell = \frac{2}{\sqrt p}\,\hat{\theta}[0]\mathbf{1}.
    \label{eq:M1-fourier}
\end{align}
Thus, the first population moment determines the mean component $\hat{\theta}[0]$.

\paragraph{Second moment.}
We next show that the second projected Fourier-cosine moment recovers the power spectrum. This is the degree-two information that is invariant under both cyclic shifts and reflection, and hence is common to the cyclic and dihedral MRA viewpoints.

\begin{proposition}[Second moment]
\label{prop:second-moment-amplitudes}
For $1\le k,m\le q$,
\begin{align}
    M^{(2)}_{k,m}(\theta) = 2\,|\hat{\theta}[k]|^2\,\delta_{km}.
    \label{eq:second-moment-formula}
\end{align}
\end{proposition}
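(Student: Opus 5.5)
We compute the moment directly from the complex form of the projected Fourier-cosine coefficients in~\eqref{eq:Cl-definition}, writing $\omega := e^{-2\pi i/p}$. Substituting into the definition~\eqref{eq:mdk} with $d=2$ and expanding the product gives four terms:
\begin{align*}
    M^{(2)}_{k,m}(\theta)
    = \frac{1}{p}\sum_{\ell=0}^{p-1}
    \Bigl(\hat\theta[k]\omega^{k\ell}+\hat\theta[-k]\omega^{-k\ell}\Bigr)
    \Bigl(\hat\theta[m]\omega^{m\ell}+\hat\theta[-m]\omega^{-m\ell}\Bigr).
\end{align*}
Each term has the form $\hat\theta[\pm k]\hat\theta[\pm m]\,\frac1p\sum_\ell \omega^{(\pm k\pm m)\ell}$.

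We then average over $\ell$ using the roots-of-unity orthogonality relation. For an integer $s$, the average $\frac1p\sum_{\ell=0}^{p-1}\omega^{s\ell}$ equals $1$ when $s\equiv 0 \pmod p$ and $0$ otherwise. This is the same fact that underlies~\eqref{eq:nonzero-cosine-sum}.

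It remains to determine which frequency combinations vanish modulo $p$. Since $1\le k,m\le q$ and $p=2q+1$, we have $2\le k+m\le 2q<p$. Hence $k+m\not\equiv 0$ and $-(k+m)\not\equiv 0 \pmod p$, so the two "same-sign" terms $\hat\theta[k]\hat\theta[m]$ and $\hat\theta[-k]\hat\theta[-m]$ always average to zero. For the mixed terms, $|k-m|\le q-1<p$, so $k-m\equiv 0$ exactly when $k=m$. This is the one place where oddness of $p$ matters, as it rules out the Nyquist coincidence $k\equiv -k$.

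Consequently, for $k\ne m$ all four terms vanish. For $k=m$, the two mixed terms survive and give $\hat\theta[k]\hat\theta[-k]+\hat\theta[-k]\hat\theta[k]$. Using the conjugate symmetry $\hat\theta[-k]=\overline{\hat\theta[k]}$ for real $\theta$, this equals $2|\hat\theta[k]|^2$, which proves~\eqref{eq:second-moment-formula}.

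There is no real obstacle in this argument. The only point needing care is the range bookkeeping $2\le k+m\le 2q<p$, which ensures that no same-sign term survives.
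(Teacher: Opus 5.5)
Your proof is correct and is essentially the paper's argument: expand $C_\ell[k]C_\ell[m]$ into the four exponential terms, average over $\ell$ so that only frequencies $\equiv 0 \pmod p$ survive, use $2\le k+m\le 2q<p$ and the fact that $k-m\equiv 0$ iff $k=m$, and finish with $\hat\theta[-k]=\overline{\hat\theta[k]}$. One small correction to your side remark: oddness of $p$ enters through the same-sign bound $k+m\le 2q<p$, which excludes the Nyquist case $2k\equiv 0$; it does not enter the mixed-term step, since $|k-m|\le q-1<p$ holds for any $p$.
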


\begin{proof}[Proof of Proposition~\ref{prop:second-moment-amplitudes}]
By definition of the projected Fourier-cosine coefficients~\eqref{eq:Cl-definition}, $C_\ell[k] = \hat{\theta}[k]e^{-2\pi i k\ell/p} + \hat{\theta}[-k]e^{2\pi i k\ell/p}$.
Using the reality condition $\hat{\theta}[-k]=\overline{\hat{\theta}[k]}$, we obtain
\begin{align}
    C_\ell[k]\,C_\ell[m]
    &=
    \hat{\theta}[k]\hat{\theta}[m]e^{-2\pi i (k+m)\ell/p} + \hat{\theta}[k]\overline{\hat{\theta}[m]}e^{-2\pi i (k-m)\ell/p}
    \notag\\
    &\qquad
    + \overline{\hat{\theta}[k]}\hat{\theta}[m]e^{2\pi i (k-m)\ell/p} + \overline{\hat{\theta}[k]}\,\overline{\hat{\theta}[m]}e^{2\pi i (k+m)\ell/p}.
    \label{eq:second-moment-product-expansion}
\end{align}
Averaging~\eqref{eq:second-moment-product-expansion} over $\ell\in\mathbb{Z}/p$, only terms with zero frequency modulo $p$ remain. Since $1\le k,m\le q$ and $p=2q+1$, the congruences $k+m\equiv0\pmod p$ and $-(k+m)\equiv0\pmod p$ are impossible, while $k-m\equiv0\pmod p$ is equivalent to $k=m$. Therefore,
\begin{align}
    M^{(2)}_{k,m}(\theta)
    &=
    \hat{\theta}[k]\overline{\hat{\theta}[m]}\,\delta_{km}
    +
    \overline{\hat{\theta}[k]}\hat{\theta}[m]\,\delta_{km}
    \notag\\
    &=
    2\,|\hat{\theta}[k]|^2\,\delta_{km},
    \label{eq:second-moment-proof-final}
\end{align}
which proves the claim.
\end{proof}

\paragraph{Third moment.}
We next turn to the third moment. The relevant entries are those for which the total Fourier frequency appearing after expansion is zero modulo $p$. For $1\le a,b,c\le q$, this occurs, up to permutation of the indices, in two cases: the non-wrapping relation $a+b=c\le q$, and the wrapping relation $a+b+c=p$. These entries yield the reflection-invariant, or dihedral, bispectral phase relations. In the first case the phase coupling is $\phi_a+\phi_b-\phi_c$, while in the second it is $\phi_a+\phi_b+\phi_c$.

\begin{proposition}[Third moment]
\label{prop:third-moment-cosines}
Recall the polar representation~\eqref{eq:polar-form}. Let $1\le a,b,c\le q$.
\begin{enumerate}
    \item If $a+b=c\le q$, then
    \begin{align}
        M^{(3)}_{a,b,c}(\theta) = 2r_a r_b r_c \cos(\phi_a+\phi_b-\phi_c). \label{eq:bispectrum-nonwrap}
    \end{align}

    \item If $a+b+c=p$, then
    \begin{align}
        M^{(3)}_{a,b,c}(\theta) = 2r_a r_b r_c \cos(\phi_a+\phi_b+\phi_c). \label{eq:bispectrum-wrap}
    \end{align}
\end{enumerate}
Moreover, up to permutation of $(a,b,c)$, all other third-moment entries vanish identically.
\end{proposition}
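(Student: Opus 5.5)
The plan is to mirror the proof of Proposition~\ref{prop:second-moment-amplitudes}: expand the triple product, average over $\ell$, and keep only the terms whose total frequency is zero modulo $p$. Using~\eqref{eq:Cl-definition} and the reality condition, write $C_\ell[k]=\sum_{s\in\{\pm1\}}\hat\theta[sk]e^{-2\pi i sk\ell/p}$. Then
\begin{align*}
C_\ell[a]C_\ell[b]C_\ell[c]=\sum_{s_a,s_b,s_c\in\{\pm1\}}\hat\theta[s_aa]\hat\theta[s_bb]\hat\theta[s_cc]\,e^{-2\pi i(s_aa+s_bb+s_cc)\ell/p}.
\end{align*}
Averaging over $\ell\in\mathbb{Z}/p$ gives
\begin{align*}
M^{(3)}_{a,b,c}(\theta)=\sum_{s}\hat\theta[s_aa]\hat\theta[s_bb]\hat\theta[s_cc]\,\mathbf{1}\{s_aa+s_bb+s_cc\equiv0 \ (\mathrm{mod}\ p)\}.
\end{align*}
The proof therefore reduces to a combinatorial question: which sign patterns satisfy the congruence.

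Since $1\le a,b,c\le q$, the signed sum $S=s_aa+s_bb+s_cc$ lies in $[-3q,3q]$. Because $3q<2p$, the only multiples of $p$ available are $S=0$ and $S=\pm p$.

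The case $S=\pm p$ forces all three signs to be equal, since any mixed-sign sum has absolute value at most $2q<p$. So it occurs exactly when $a+b+c=p$.

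For $S=0$, the signs cannot all agree, because a sum of three positive integers is nonzero. Up to a global sign flip, exactly one index carries the opposite sign. This means one index equals the sum of the other two, which up to permutation is $a+b=c$, and then $c\le q$ automatically.

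The two cases exclude each other. If $a+b=c$ and $a+b+c=p$ held simultaneously, then $2c=p$, which is impossible because $p$ is odd. Hence, outside these two configurations (up to permutation), no sign pattern survives and the entry vanishes identically. This gives the final claim of the proposition.

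It remains to evaluate the surviving terms.

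\emph{Case $a+b=c$.} The surviving patterns are $(+,+,-)$ and $(-,-,+)$. They give
\begin{align*}
\hat\theta[a]\hat\theta[b]\overline{\hat\theta[c]}+\overline{\hat\theta[a]\hat\theta[b]}\hat\theta[c]=2r_ar_br_c\cos(\phi_a+\phi_b-\phi_c),
\end{align*}
using the polar form~\eqref{eq:polar-form}.

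We must check that no other pattern sneaks in. With distinct-looking indices, could another index also equal the sum of the remaining two? If, for instance, $a=b+c$ also held, then $c=a+b>a=b+c$, which is a contradiction. When indices coincide (for example $a=b$, $c=2a$), the count is unchanged, because the sign patterns are indexed by positions, not by values.

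\emph{Case $a+b+c=p$.} The surviving patterns are $(+,+,+)$ and $(-,-,-)$. They give
\begin{align*}
2\Re\{\hat\theta[a]\hat\theta[b]\hat\theta[c]\}=2r_ar_br_c\cos(\phi_a+\phi_b+\phi_c).
\end{align*}

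The main obstacle is this bookkeeping: showing that in each admissible configuration exactly two conjugate sign patterns survive, including the degenerate cases with repeated indices. This is why the range bound $|S|\le 3q<2p$ and the parity argument are set up first; everything else is routine.
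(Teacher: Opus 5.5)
Your proof is correct and follows the paper's argument: expand the triple product into the eight signed-frequency terms, average over $\ell$, and keep only the sign patterns with $\pm a\pm b\pm c\equiv 0 \pmod p$. Your range bound $|S|\le 3q<2p$, the parity argument ruling out $a+b=c$ and $a+b+c=p$ holding together, and your remark on repeated indices spell out bookkeeping that the paper asserts without detail.
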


\begin{proof}[Proof of Proposition~\ref{prop:third-moment-cosines}]
By~\eqref{eq:mdk},
\begin{align}
    M^{(3)}_{a,b,c}(\theta) = \frac{1}{p}\sum_{\ell=0}^{p-1} C_\ell[a]\,C_\ell[b]\,C_\ell[c].
    \label{eq:third-moment-definition-proof}
\end{align}
Expanding the product in~\eqref{eq:third-moment-definition-proof} gives eight terms. Each term has an exponential factor whose frequency is one of $\pm a\pm b\pm c$.
Averaging over $\ell\in\mathbb{Z}/p$ eliminates all terms except those whose total frequency is zero modulo $p$.

Assume first that $a+b=c\le q$. Since $1\le a,b,c\le q$ and $p=2q+1$, the only zero-frequency terms are those corresponding to $a+b-c=0$ and $-a-b+c=0$.
Hence the surviving contribution is
\begin{align}
    M^{(3)}_{a,b,c}(\theta)
    &=
    \hat{\theta}[a]\hat{\theta}[b]\overline{\hat{\theta}[c]}
    +
    \overline{\hat{\theta}[a]\hat{\theta}[b]\overline{\hat{\theta}[c]}}
    \notag\\
    &= 2\Re\!\bigl(\hat{\theta}[a]\hat{\theta}[b]\overline{\hat{\theta}[c]}\bigr).
    \label{eq:third-moment-nonwrap-real}
\end{align}
Using~\eqref{eq:polar-form}, $\hat{\theta}[a]\hat{\theta}[b]\overline{\hat{\theta}[c]} = r_a r_b r_c e^{i(\phi_a+\phi_b-\phi_c)}$, and 
substituting into~\eqref{eq:third-moment-nonwrap-real} proves~\eqref{eq:bispectrum-nonwrap}.

Assume next that $a+b+c=p$. Then the only zero-frequency terms modulo $p$ are those corresponding to $a+b+c\equiv 0 \pmod p$ and $-(a+b+c)\equiv 0 \pmod p$.
Therefore,
\begin{align}
    M^{(3)}_{a,b,c}(\theta)
    &=
    \hat{\theta}[a]\hat{\theta}[b]\hat{\theta}[c] + \overline{\hat{\theta}[a]\hat{\theta}[b]\hat{\theta}[c]}
    \notag\\
    &=    2\Re\!\bigl(\hat{\theta}[a]\hat{\theta}[b]\hat{\theta}[c]\bigr).
    \label{eq:third-moment-wrap-real}
\end{align}
Using again~\eqref{eq:polar-form}, $\hat{\theta}[a]\hat{\theta}[b]\hat{\theta}[c] = r_a r_b r_c e^{i(\phi_a+\phi_b+\phi_c)}$, and substituting into~\eqref{eq:third-moment-wrap-real} proves~\eqref{eq:bispectrum-wrap}.

Finally, if a third-moment entry is nonzero, then at least one of the frequencies $\pm a\pm b\pm c$ must vanish modulo $p$. Since $1\le a,b,c\le q$ and $p=2q+1$, this can occur only when one index is the sum of the other two, or when $a+b+c=p$. Thus, up to permutation, the only nonzero entries are precisely the two cases above.
\end{proof}

\subsection{Chain equations and consistency constraints}

We now specialize the third-order cosine relations from Proposition~\ref{prop:third-moment-cosines} to a set of equations that can be used constructively. These are the same type of reflection-invariant phase relations that arise in dihedral MRA~\cite{edidin2026reflection}: the cosine equations determine phase couplings only up to sign, and the remaining third-order relations are used to resolve these sign ambiguities.

\paragraph{Chain equations.}
The first family consists of the \emph{chain equations}, obtained from the entries
\[
    M^{(3)}_{1,j,j+1}(\theta),\quad 1\le j\le q-1,
    \qquad
    M^{(3)}_{1,q,q}(\theta).
\]
Assume $r_k\neq 0$ for all $k=1,\dots,q$, so that the normalized third-moment quantities below are well defined. By Propositions~\ref{prop:second-moment-amplitudes} and~\ref{prop:third-moment-cosines}, the moments determine
\begin{align}
    c_j &:= \frac{M^{(3)}_{1,j,j+1}(\theta)}{2r_1r_jr_{j+1}} = \cos(\phi_1+\phi_j-\phi_{j+1}),
    \qquad 1\le j\le q-1,
    \label{eq:cj}
    \\
    c_q &:= \frac{M^{(3)}_{1,q,q}(\theta)}{2r_1r_q^2} = \cos(\phi_1+2\phi_q),
    \label{eq:cq}
\end{align}
where~\eqref{eq:cq} uses $1+q+q=p$. Define
\begin{align}
    \beta_j:=\arccos(c_j)\in[0,\pi],
    \qquad j=1,\dots,q.
    \label{eq:beta-definition}
\end{align}
Then, the chain equations take the form
\begin{align}
    \phi_1+\phi_j-\phi_{j+1}
    &\equiv \pm \beta_j \pmod{2\pi},
    \qquad 1\le j\le q-1,
    \label{eq:chain-equation-j}
    \\
    \phi_1+2\phi_q
    &\equiv \pm \beta_q \pmod{2\pi}.
    \label{eq:chain-equation-q}
\end{align}
Each equation contains a sign ambiguity because the cosine determines its argument only up to sign. 
We thus define a \emph{sign branch} to be a choice
\begin{align}
    \varepsilon=(\varepsilon_1,\dots,\varepsilon_q)\in\{\pm1\}^q
\end{align}
of signs in~\eqref{eq:chain-equation-j}-\eqref{eq:chain-equation-q}.

The next lemma records the phase propagation recursion. For a fixed sign branch $\varepsilon$ and an anchor value $\phi_1=\alpha$, the chain equations determine all remaining phases.

\begin{lem}[Anchored phase propagation]
\label{lem:anchored-frequency-marching}
Fix $\alpha\in\mathbb{R}$ and set $\phi_1=\alpha$. Let $\varepsilon=(\varepsilon_1,\dots,\varepsilon_q)\in\{\pm1\}^q$ be a sign branch for the chain equations~\eqref{eq:chain-equation-j}-\eqref{eq:chain-equation-q}. Then, the corresponding phase trajectory satisfies
\begin{align}
    \phi_2 &= 2\alpha-\varepsilon_1\beta_1,
    \label{eq:anchored-phi2}
    \\
    \phi_{j+1} &= \alpha+\phi_j-\varepsilon_j\beta_j,
    \qquad 2\le j\le q-1.
    \label{eq:anchored-recursion}
\end{align}
Equivalently,
\begin{align}
    \phi_k = k\alpha-\sum_{t=1}^{k-1}\varepsilon_t\beta_t,
    \qquad k=1,\dots,q.
    \label{eq:phi-explicit-anchored}
\end{align}
\end{lem}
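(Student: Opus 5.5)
The lemma is a direct consequence of the chain equations once a sign branch is fixed. The plan is to turn each equation in~\eqref{eq:chain-equation-j} into an equality with the prescribed sign and solve for the higher-index phase. Phases are only defined modulo $2\pi$, so every identity in the lemma is read modulo $2\pi$; equivalently, we choose the representative of each $\phi_{j+1}$ that makes the displayed formula hold exactly. This is harmless because $\hat\theta[k]=r_ke^{i\phi_k}$ depends only on $\phi_k \bmod 2\pi$.

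\emph{Step 1.} Fix the branch $\varepsilon$ and the anchor $\phi_1=\alpha$. The $j=1$ chain equation reads $\phi_1+\phi_1-\phi_2\equiv\varepsilon_1\beta_1$. Solving it gives $\phi_2=2\alpha-\varepsilon_1\beta_1$, which is~\eqref{eq:anchored-phi2}.

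\emph{Step 2.} For $2\le j\le q-1$, the chain equation $\phi_1+\phi_j-\phi_{j+1}\equiv\varepsilon_j\beta_j$ rearranges to $\phi_{j+1}=\alpha+\phi_j-\varepsilon_j\beta_j$, which is~\eqref{eq:anchored-recursion}. Note that the case $j=1$ of this recursion coincides with Step 1, since $\phi_1=\alpha$.

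\emph{Step 3.} Prove~\eqref{eq:phi-explicit-anchored} by induction on $k$.
\begin{itemize}
\item Base case $k=1$: the sum is empty and the formula reads $\phi_1=\alpha$, which is true.
\item Inductive step: if $\phi_k=k\alpha-\sum_{t=1}^{k-1}\varepsilon_t\beta_t$, then the recursion gives
\[
\phi_{k+1}=\alpha+k\alpha-\sum_{t=1}^{k-1}\varepsilon_t\beta_t-\varepsilon_k\beta_k=(k+1)\alpha-\sum_{t=1}^{k}\varepsilon_t\beta_t .
\]
\end{itemize}
The induction runs up to $k=q$. It uses only the signs $\varepsilon_1,\dots,\varepsilon_{q-1}$. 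The last sign $\varepsilon_q$, coming from the terminal equation~\eqref{eq:chain-equation-q}, plays no role here; it is reserved for later constraining $\alpha$.

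There is no real obstacle. The only point needing care is the bookkeeping modulo $2\pi$: the equalities hold as congruences, or exactly after choosing representatives. A secondary point is noting that the terminal equation is not used in the propagation itself.
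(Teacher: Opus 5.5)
Your proposal is correct and follows the paper's proof: you substitute $\phi_1=\alpha$ into the fixed-sign chain equations, solve each for the higher-index phase, and iterate. The paper leaves that iteration informal, whereas you write it out as an explicit induction. Your remarks that the identities hold modulo $2\pi$ (or exactly after choosing representatives) and that $\varepsilon_q$ plays no role in the propagation are accurate points that the paper leaves implicit.
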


\begin{proof}[Proof of Lemma~\ref{lem:anchored-frequency-marching}]
Substituting $\phi_1=\alpha$ into~\eqref{eq:chain-equation-j} gives \eqref{eq:anchored-phi2}. Similarly,~\eqref{eq:chain-equation-j} gives \eqref{eq:anchored-recursion}. Iterating this recursion yields \eqref{eq:phi-explicit-anchored}.
\end{proof}


\paragraph{Consistency constraints.}
For a fixed sign branch and a fixed anchor value $\phi_1$, the chain equations in Lemma~\ref{lem:anchored-frequency-marching} determine all remaining phases recursively. Thus, before imposing the additional consistency constraints, the chain equations generate $2^q$ candidate sign branches. The role of the consistency constraints below is to identify the admissible normalized branch.

For the true phase vector, let $\sigma=(\sigma_1,\dots,\sigma_q)\in\{\pm1\}^q$ denote the corresponding sign branch, so that
\begin{align}
    \phi_1+\phi_j-\phi_{j+1}
    &\equiv \sigma_j\beta_j \pmod{2\pi},
    \qquad 1\le j\le q-1,
    \label{eq:true-sign-j}
    \\
    \phi_1+2\phi_q
    &\equiv \sigma_q\beta_q \pmod{2\pi}.
    \label{eq:true-sign-q}
\end{align}
The sign branch $-\sigma$ corresponds to the reflected phase vector $(-\phi_1,\dots,-\phi_q)$, since all phase relations change sign under reflection.
Thus, the global sign change $\sigma\mapsto-\sigma$ is an unavoidable ambiguity: no reflection-invariant third-order moment can distinguish these two branches. The remaining question is whether any other sign branch is compatible with the third-order moment data. Below, we impose a generic nondegeneracy condition and prove in Lemma~\ref{lem:sign-branch-elimination} that the consistency constraints exclude all branches except $\sigma$ and $-\sigma$.

We use additional third-order moments to test whether a sign branch is compatible with the full reflection-invariant third-moment data. Define
\begin{align}
    d_j &:= \frac{M^{(3)}_{2,j,j+2}(\theta)}{2r_2r_jr_{j+2}} = \cos(\phi_2+\phi_j-\phi_{j+2}),
    \qquad 2\le j\le q-2,
    \label{eq:dj}
    \\
    d_\ast &:= \frac{M^{(3)}_{2,q-1,q}(\theta)}{2r_2r_{q-1}r_q} = \cos(\phi_2+\phi_{q-1}+\phi_q),
    \label{eq:dstar}
\end{align}
where~\eqref{eq:dstar} follows from $2+(q-1)+q=p$ and Proposition~\ref{prop:third-moment-cosines}(2).

Substituting the recursion relation formula~\eqref{eq:phi-explicit-anchored} into~\eqref{eq:dj}-\eqref{eq:dstar}, the anchor $\alpha=\phi_1$ cancels. Thus, the consistency relations depend only on the sign branch:
\begin{align}
    \phi_2+\phi_j-\phi_{j+2}
    &\equiv
    -\varepsilon_1\beta_1+\varepsilon_j\beta_j+\varepsilon_{j+1}\beta_{j+1}
    \pmod{2\pi},
    \qquad 2\le j\le q-2,
    \label{eq:interior-angle}
    \\
    \phi_2+\phi_{q-1}+\phi_q
    &\equiv -\varepsilon_1\beta_1+\varepsilon_{q-1}\beta_{q-1}+\varepsilon_q\beta_q \pmod{2\pi}.
    \label{eq:endpoint-angle}
\end{align}
Equivalently, any admissible sign branch must satisfy the sign-only constraints
\begin{align}
    \cos\bigl(-\varepsilon_1\beta_1+\varepsilon_j\beta_j+\varepsilon_{j+1}\beta_{j+1}\bigr)
    &=
    d_j,
    \qquad 2\le j\le q-2,
    \label{eq:sign-only-dj}
    \\
    \cos\bigl(-\varepsilon_1\beta_1+\varepsilon_{q-1}\beta_{q-1}+\varepsilon_q\beta_q\bigr)
    &=
    d_\ast.
    \label{eq:sign-only-dstar}
\end{align}
Thus, these constraints are used to prune incompatible sign branches. The cyclic-shift degree of freedom, represented by the anchor $\phi_1$, is recovered later from the terminal chain equation.

\begin{remark}[Choice of the consistency family]
\label{rem:choice-consistency-family}
The family $M^{(3)}_{2,j,j+2}$, together with the endpoint entry $M^{(3)}_{2,q-1,q}$, is chosen because it yields the local sign relations~\eqref{eq:interior-angle}-\eqref{eq:endpoint-angle}. Other nonzero third-moment entries give valid consistency checks as well, but typically involve longer combinations of signs after substituting the chain recursion. The family~\eqref{eq:dj}-\eqref{eq:dstar} is therefore a convenient choice for the constructive proof and the branch-and-prune procedure.
\end{remark}

We impose a nondegeneracy condition that rules out accidental collisions among the cosine consistency values. This ensures that the sign-only constraints identify the correct sign branch, up to the unavoidable global reflection.

\begin{assum}[Nondegeneracy and sign separation]
\label{ass:generic-sign-separation}
Let $p=2q+1\ge 7$. Recall the definition of $\beta_j$ in~\eqref{eq:beta-definition} and let $\sigma=(\sigma_1,\dots,\sigma_q)$ denote the true sign branch defined by~\eqref{eq:true-sign-j}-\eqref{eq:true-sign-q}. Assume that
\begin{align}
    \beta_j\notin \pi\mathbb{Z},
    \qquad j=1,\dots,q,
    \label{eq:beta-nondegenerate}
\end{align}
that the first consistency value separates the true initial sign triple up to global reflection, namely,
\begin{align}
    \cos\bigl(-\tau_1\beta_1+\tau_2\beta_2+\tau_3\beta_3\bigr) \neq \cos\bigl(-\sigma_1\beta_1+\sigma_2\beta_2+\sigma_3\beta_3\bigr),
    \label{eq:base-separation}
\end{align}
for every $(\tau_1,\tau_2,\tau_3)\in\{\pm1\}^3$ with $(\tau_1,\tau_2,\tau_3)\neq \pm(\sigma_1,\sigma_2,\sigma_3)$, and that
\begin{align}
    -\sigma_1\beta_1+\sigma_m\beta_m \notin \pi\mathbb{Z},
    \qquad m=3,\dots,q-1.
    \label{eq:generic-tail-separation}
\end{align}
\end{assum}

Condition~\eqref{eq:base-separation} makes the initial sign triple identifiable, up to the global sign change corresponding to reflection. Condition~\eqref{eq:generic-tail-separation} then ensures that each subsequent local consistency check separates the newly introduced sign.

\begin{remark}[Genericity of Assumption~\ref{ass:generic-sign-separation}]
\label{rem:genericity-sign-separation}
Assumption~\ref{ass:generic-sign-separation} holds outside a measure-zero set of phase vectors. Indeed, the condition $\beta_j\notin\pi\mathbb{Z}$ excludes the cases in which the corresponding chain angle is an integer multiple of $\pi$. For each fixed sign triple $(\tau_1,\tau_2,\tau_3)\neq \pm(\sigma_1,\sigma_2,\sigma_3)$, failure of~\eqref{eq:base-separation} is equivalent to a congruence of the form
\[
    -\tau_1\beta_1+\tau_2\beta_2+\tau_3\beta_3 \equiv \pm\bigl(-\sigma_1\beta_1+\sigma_2\beta_2+\sigma_3\beta_3\bigr) \pmod{2\pi},
\]
and hence defines a finite union of affine hyperplanes. The exclusions in~\eqref{eq:generic-tail-separation} are also linear resonance conditions modulo $\pi$. Since only finitely many such conditions are imposed, their union has Lebesgue measure zero. Thus, the assumption is generic in the sense that the degenerate phase configurations form a measure-zero exceptional set.
\end{remark}

\begin{lem}[Sign-branch elimination]
\label{lem:sign-branch-elimination}
Let $p=2q+1\ge 7$, and assume Assumption~\ref{ass:generic-sign-separation}. Let $\sigma=(\sigma_1,\dots,\sigma_q)$ denote the true sign branch and let $\varepsilon=(\varepsilon_1,\dots,\varepsilon_q)\in\{\pm1\}^q$ be a sign branch satisfying the sign-only consistency equations~\eqref{eq:sign-only-dj}-\eqref{eq:sign-only-dstar}. Then $\varepsilon=\pm\sigma$.
\end{lem}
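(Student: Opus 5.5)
The plan is a base case followed by an induction on the sign index, in which each consistency equation pins down one new sign. Two preliminary remarks set this up.

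First, the left-hand sides of \eqref{eq:sign-only-dj}--\eqref{eq:sign-only-dstar} are even functions of a global sign flip $\varepsilon\mapsto-\varepsilon$, because cosine is even. So $-\varepsilon$ satisfies the consistency equations whenever $\varepsilon$ does.

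Second, the true branch $\sigma$ satisfies the same equations. Indeed, substituting \eqref{eq:true-sign-j}--\eqref{eq:true-sign-q} into \eqref{eq:interior-angle}--\eqref{eq:endpoint-angle} gives exactly the values $d_j$ and $d_\ast$ defined in \eqref{eq:dj}--\eqref{eq:dstar}. Consequently, for every constraint we may equate the $\varepsilon$-expression with the $\sigma$-expression inside the cosine.

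\emph{Base case.} The first consistency equation has angle $-\varepsilon_1\beta_1+\varepsilon_2\beta_2+\varepsilon_3\beta_3$.
\begin{itemize}
\item If $q\ge4$, this is the $j=2$ instance of \eqref{eq:sign-only-dj}.
\item If $q=3$, the family \eqref{eq:sign-only-dj} is empty, and the endpoint equation \eqref{eq:sign-only-dstar} with $q-1=2$ and $q=3$ has exactly this form.
\end{itemize}
In either case
\[
\cos(-\varepsilon_1\beta_1+\varepsilon_2\beta_2+\varepsilon_3\beta_3)=\cos(-\sigma_1\beta_1+\sigma_2\beta_2+\sigma_3\beta_3).
\]
Condition \eqref{eq:base-separation} then forces $(\varepsilon_1,\varepsilon_2,\varepsilon_3)=\pm(\sigma_1,\sigma_2,\sigma_3)$. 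Replacing $\varepsilon$ by $-\varepsilon$ if necessary (allowed by the first remark), we may assume $\varepsilon_t=\sigma_t$ for $t\le3$. It then suffices to show $\varepsilon=\sigma$.

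\emph{Inductive step.} Suppose $\varepsilon_t=\sigma_t$ for all $t\le m$, where $3\le m\le q-1$. The next sign comes from the equation involving $\varepsilon_1,\varepsilon_m,\varepsilon_{m+1}$: equation \eqref{eq:sign-only-dj} with $j=m$ when $m\le q-2$, and \eqref{eq:sign-only-dstar} when $m=q-1$. Set $\gamma:=-\sigma_1\beta_1+\sigma_m\beta_m$. The equation reads
\[
\cos(\gamma+\varepsilon_{m+1}\beta_{m+1})=\cos(\gamma+\sigma_{m+1}\beta_{m+1}).
\]
Suppose, for contradiction, that $\varepsilon_{m+1}=-\sigma_{m+1}$. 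Equality of cosines gives two possibilities.
\begin{itemize}
\item $\gamma-\sigma_{m+1}\beta_{m+1}\equiv\gamma+\sigma_{m+1}\beta_{m+1}$. Then $2\beta_{m+1}\in2\pi\mathbb{Z}$, i.e.\ $\beta_{m+1}\in\pi\mathbb{Z}$, which contradicts \eqref{eq:beta-nondegenerate}.
\item $\gamma-\sigma_{m+1}\beta_{m+1}\equiv-(\gamma+\sigma_{m+1}\beta_{m+1})$. Then $2\gamma\in2\pi\mathbb{Z}$, i.e.\ $\gamma\in\pi\mathbb{Z}$, which contradicts \eqref{eq:generic-tail-separation}. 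That condition applies because $m$ lies in its range $3,\dots,q-1$.
\end{itemize}
Hence $\varepsilon_{m+1}=\sigma_{m+1}$. Running $m=3,\dots,q-1$ recovers all signs, so $\varepsilon=\sigma$ after normalization, and $\varepsilon=\pm\sigma$ in general.

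\emph{Main obstacle.} The core argument is an elementary cosine-equality case split. The step requiring the most care is the index bookkeeping, where three things must be checked.
\begin{itemize}
\item The interior family $d_2,\dots,d_{q-2}$ together with $d_\ast$ covers exactly the signs $\varepsilon_4,\dots,\varepsilon_q$, plus the base triple.
\item The indices $m$ used in the inductive step coincide with the range $3\le m\le q-1$ of \eqref{eq:generic-tail-separation}.
\item The small case $q=3$, where $d_\ast$ serves as the base equation, is handled separately as above.
\end{itemize}
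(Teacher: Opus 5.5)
Your proposal is correct and follows essentially the same route as the paper. It quotients out the global reflection, pins down $(\varepsilon_1,\varepsilon_2,\varepsilon_3)$ from $d_2$ (or $d_\ast$ when $q=3$) via \eqref{eq:base-separation}, and then forces each later sign, since $\cos(x-\sigma_m\beta_m)=\cos(x+\sigma_m\beta_m)$ would contradict \eqref{eq:beta-nondegenerate} or \eqref{eq:generic-tail-separation}. The differences are cosmetic: you use forward induction instead of a minimal counterexample, the case split $A\equiv\pm B$ instead of $2\sin x\sin y=0$, and you state explicitly that $\sigma$ itself satisfies the consistency equations, which the paper uses tacitly.
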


\begin{proof}[Proof of Lemma~\ref{lem:sign-branch-elimination}]
The consistency equations are invariant under the global sign change $\varepsilon\mapsto -\varepsilon$, since every consistency angle changes sign and cosine is even. We may therefore assume, without loss of generality, that $\varepsilon_1=\sigma_1$.

We first identify the initial triple. If $q=3$, the endpoint constraint~\eqref{eq:sign-only-dstar} gives
\begin{align}
    \cos\bigl(-\sigma_1\beta_1+\varepsilon_2\beta_2+\varepsilon_3\beta_3\bigr) = \cos\bigl(-\sigma_1\beta_1+\sigma_2\beta_2+\sigma_3\beta_3\bigr).
    \label{eq:initial-triple-q3}
\end{align}
If $q\ge4$, the same identity follows from the interior constraint~\eqref{eq:sign-only-dj} with $j=2$. In either case, the base-separation condition~\eqref{eq:base-separation} implies
\begin{align}
    (\varepsilon_1,\varepsilon_2,\varepsilon_3) = (\sigma_1,\sigma_2,\sigma_3).
    \label{eq:initial-triple-identified}
\end{align}

It remains to show that all later signs agree. Suppose, for contradiction, that there is an index $j\ge4$ with $\varepsilon_j\neq\sigma_j$, and let $j$ be the smallest such index. If $j\le q-1$, then applying~\eqref{eq:sign-only-dj} with index $j-1$, and using the minimality of $j$, gives
\begin{align}
    \cos(x-\sigma_j\beta_j) = \cos(x+\sigma_j\beta_j),
    \qquad
    x:=-\sigma_1\beta_1+\sigma_{j-1}\beta_{j-1}.
    \label{eq:tail-sign-contradiction-interior}
\end{align}
If $j=q$, the endpoint constraint~\eqref{eq:sign-only-dstar} gives the same identity with
$x:=-\sigma_1\beta_1+\sigma_{q-1}\beta_{q-1}$. In both cases,
\begin{align}
    \cos(x-\sigma_j\beta_j)-\cos(x+\sigma_j\beta_j) = 2\sin(x)\sin(\sigma_j\beta_j) = 0.
    \label{eq:cos-difference-sign-proof}
\end{align}
But $\sin(\sigma_j\beta_j)\neq0$ by~\eqref{eq:beta-nondegenerate}, and $\sin(x)\neq0$ by~\eqref{eq:generic-tail-separation}. This is a contradiction.

Therefore no such $j$ exists, and hence $\varepsilon=\sigma$ under the normalization $\varepsilon_1=\sigma_1$. Removing this normalization gives $\varepsilon=\pm\sigma$.
\end{proof}

Lemma~\ref{lem:sign-branch-elimination} shows that the consistency constraints determine the sign branch uniquely, up to the unavoidable global reflection $\varepsilon\mapsto-\varepsilon$. Once this branch is identified, the terminal chain equation recovers the admissible values of $\phi_1$, and the remaining phases follow from the frequency-marching recursion.

\subsection{ Proof of Theorem~\ref{thm:projected-mra-anchored-recovery}}
\label{sec:proof_thm_projected-mra-anchored-recovery}

We now combine the preceding ingredients to prove the main recovery result, which is summarized in Algorithm~\ref{alg:projected-mra-constructive-recovery}.

By~\eqref{eq:M1-fourier}, the first population moment determines the zero Fourier mode $\hat{\theta}[0]$. Once $\hat{\theta}[0]$ is known, the affine relation~\eqref{eq:Xl-reduced-matrix-form} allows the observation-space moments up to order three to be converted into the projected Fourier-cosine moments up to order three by~\eqref{eq:transfer-projected-to-cosine-moments}, because the cosine matrix $A$ is invertible by Lemma~\ref{lem:cosine-matrix-invertible}. Hence the moments $T^{(2)}(\theta)$ and $T^{(3)}(\theta)$ determine the corresponding quantities $M^{(2)}_{k,m}(\theta)$ and $M^{(3)}_{a,b,c}(\theta)$.

Proposition~\ref{prop:second-moment-amplitudes} determines the magnitudes $r_k=|\hat{\theta}[k]|$, for $k=1,\dots,q$.
Proposition~\ref{prop:third-moment-cosines} then determines the chain cosines~\eqref{eq:cj}-\eqref{eq:cq} and the consistency cosines~\eqref{eq:dj}--\eqref{eq:dstar}. Therefore the values $\beta_1,\dots,\beta_q$ in~\eqref{eq:beta-definition} and the sign-only consistency equations~\eqref{eq:sign-only-dj}--\eqref{eq:sign-only-dstar} are determined by the first three moments. By Lemma~\ref{lem:sign-branch-elimination}, these equations identify the true sign branch up to the global reflection ambiguity $\varepsilon\mapsto-\varepsilon$. Fix one of the two branches, denoted $\varepsilon$.

It remains to recover the phases. By Lemma~\ref{lem:anchored-frequency-marching},
\begin{align}
    \phi_k \equiv k\phi_1-\sum_{t=1}^{k-1}\varepsilon_t\beta_t \pmod{2\pi},
    \qquad k=1,\dots,q.
    \label{eq:phase-recursion-main-proof}
\end{align}
Substituting $k=q$ into the terminal chain equation
$\phi_1+2\phi_q=\varepsilon_q\beta_q$ gives
\begin{align}
    p\phi_1 \equiv 2\sum_{t=1}^{q-1}\varepsilon_t\beta_t+\varepsilon_q\beta_q \pmod{2\pi}.
    \label{eq:phi1-recovery-main-proof}
\end{align}
Thus $\phi_1$ is determined modulo $2\pi/p$, giving exactly $p$ admissible values. For each such value,~\eqref{eq:phase-recursion-main-proof} determines all phases $\phi_2,\dots,\phi_q$, and hence all Fourier coefficients $\hat{\theta}[k]=r_ke^{i\phi_k}$, with $\hat{\theta}[-k]=\overline{\hat{\theta}[k]}$, together with the already recovered coefficient $\hat{\theta}[0]$. Applying the inverse Fourier transform gives $p$ signal representatives.

Finally, these $p$ representatives are exactly the cyclic-shift orbit. Indeed, $\widehat{R_\ell\theta}[k] = e^{-2\pi i k\ell/p}\hat{\theta}[k]$, so a shift changes $\phi_1$ by $-2\pi\ell/p$. The remaining global sign ambiguity in the branch corresponds to reflection, since $J$ sends $\hat{\theta}[k]\mapsto \overline{\hat{\theta}[k]}$, and $\phi_k\mapsto-\phi_k$.
Consequently, the recovered representatives are precisely $D_{2p}\cdot\theta = \{R_\ell\theta,\;R_\ell J\theta:\ell\in\mathbb{Z}/p\}$.
This proves the theorem.

\begin{algorithm}[t!]
\caption{Signal recovery from population moments}
\label{alg:projected-mra-constructive-recovery}
\textbf{Input:} Moments $T^{(1)}(\theta),T^{(2)}(\theta),T^{(3)}(\theta)$.

\textbf{Output:} An estimate of the dihedral orbit $D_{2p}\cdot\theta$.

\textbf{Procedure:}
\begin{enumerate}
    \item Recover $\hat\theta[0]$ from $T^{(1)}(\theta)$ using~\eqref{eq:M1-fourier}, and convert the  moments $T^{(2)}(\theta)$ and $T^{(3)}(\theta)$ to $M^{(2)}(\theta)$ and $M^{(3)}(\theta)$ using~\eqref{eq:transfer-projected-to-cosine-moments}.

    \item Recover the magnitudes $r_1,\ldots,r_q$ from $M^{(2)}(\theta)$ by Proposition~\ref{prop:second-moment-amplitudes}.

    \item Extract the chain cosines $c_1,\ldots,c_q$ from~\eqref{eq:cj}-\eqref{eq:cq}, set $\beta_j=\arccos(c_j)$, and extract the consistency cosines $d_j$ and $d_\ast$ from~\eqref{eq:dj}-\eqref{eq:dstar}.

    \item Recover the sign branch sequentially. Fix a reflection normalization,  $\varepsilon_1=+1$. Use the first consistency relation, \eqref{eq:sign-only-dj} with $j=2$, to determine $(\varepsilon_1,\varepsilon_2,\varepsilon_3)$. Then, for $m=4,\ldots,q-1$, determine $\varepsilon_m$ from $\cos(-\varepsilon_1\beta_1+\varepsilon_{m-1}\beta_{m-1}+\varepsilon_m\beta_m)=d_{m-1}$, and determine $\varepsilon_q$ from $\cos(-\varepsilon_1\beta_1+\varepsilon_{q-1}\beta_{q-1}+\varepsilon_q\beta_q)=d_\ast$. 

    \item Let $\widehat\varepsilon=(\widehat\varepsilon_1,\ldots,\widehat\varepsilon_q)$ be the recovered normalized branch. The two surviving branches are $\widehat\varepsilon$ and $-\widehat\varepsilon$, corresponding to reflection.

    \item Recover the admissible values of $\phi_1$ from~\eqref{eq:phi1-recovery-main-proof}. For each such value, recover the remaining phases by~\eqref{eq:phi-explicit-anchored}, form $\hat\theta[k]=r_ke^{i\phi_k}$ and $\hat\theta[-k]=\overline{\hat\theta[k]}$ for $k=1,\ldots,q$, and invert the Fourier transform.
\end{enumerate}
\end{algorithm}

\paragraph{Computational complexity.} The constructive recovery procedure in Algorithm~\ref{alg:projected-mra-constructive-recovery} is computationally efficient. From $n$ samples $\{y_i \}_{i = 1}^n$ in $\mathbb{R}^q$, the empirical moments up to order three can be formed directly in $O(nq^3)$ operations. The transfer from the moments of the projected MRA observation to Fourier-cosine moments, given by~\eqref{eq:transfer-projected-to-cosine-moments}, can be implemented by applying the change of coordinates along the tensor modes, at cost $O(q^4)$ for the full third-order tensor.

The main computational point is that the phase-recovery step is not exponential in $q$. The cosine phase equations generate $2^q$ possible sign branches, and a direct implementation analogous to the sign search in dihedral MRA~\cite{edidin2026reflection} would require an exhaustive search over these branches. In contrast, the proof of Lemma~\ref{lem:sign-branch-elimination} gives a sequential branch-and-prune procedure: after fixing the global reflection ambiguity, the first consistency relation determines the initial sign triple, and each subsequent local consistency relation determines one additional sign. Hence, the number of active normalized branches remains one throughout the algorithm, and sign recovery costs $O(q)$ rather than $O(2^q)$.

Following the recovery of the sign branch, the phase recursion costs $O(q)$, and the final inverse Fourier transform costs $O(p\log p)$ using an FFT. Therefore, using the direct implementations described above, the overall finite-sample complexity is
\begin{align}
    O(nq^3) + O(q^4) + O(q) + O(p\log p) = O(nq^3 + q^4).
\end{align}
This is polynomial in the signal dimension, linear in $n$, and, in particular, avoids the exponential $O(2^q)$ sign search that would arise from a direct enumeration of all phase branches.

\section{Finite-sample implementation and numerical experiments}
\label{sec:finite-sample-analysis}

Although the population proof of Theorem~\ref{thm:projected-mra-anchored-recovery} suggests an explicit branch-and-prune reconstruction procedure (Algorithm~\ref{alg:projected-mra-constructive-recovery}), we do not use this algorithm as the main finite-sample estimator. The reason is that the proof uses only a selected subset of the available third-order moment equations, chosen to give a transparent constructive identifiability argument. At the population level, this loss of information is harmless, since the selected equations already determine the dihedral orbit generically. In finite samples, however, the branch decisions rely on exact algebraic equalities among cosine relations, and these equalities are perturbed by sampling noise and by the transformation from  moments of the projected MRA to Fourier-cosine coordinates. Consequently, a hard pruning rule can incorrectly discard the correct branch, while phase propagation along a single chain can accumulate error. This makes the branch-and-prune procedure numerically fragile in finite samples. We therefore use it primarily as a population identifiability mechanism, and compare more robust finite-sample estimators based on global objective functions that use all the moment information.

We compare three reconstruction strategies: expectation--maximization (EM) as a likelihood-based benchmark, direct optimization of the  moments $T^{(d)}$~\eqref{eq:population-moment-definition}, and direct optimization of the transformed Fourier-cosine moments $M^{(d)}$~\eqref{eq:cosine-population-moment-definition}. 

\subsection{Empirical moments and Gaussian debiasing}

The population recovery procedure is formulated in terms of the noiseless moments of the projected MRA signal, $T^{(1)}(\theta),T^{(2)}(\theta),T^{(3)}(\theta)$. In the finite-sample setting, these moments are not observed directly. Instead, we observe the samples
\begin{align}
    y_i=\Pi(R_{\ell_i}\theta)+\xi_i \; \in \; \mathbb{R}^q,
    \qquad i=1,\dots,n,
    \label{eq:finite-sample-observation-model}
\end{align}
where $\ell_i$ are i.i.d. uniform on $\mathbb{Z}/p$, $\Pi$ is the projection map~\eqref{eq:projection-map-model}, and $\xi_i\sim\mathcal{N}(0,\sigma^2 I_q)$ are independent Gaussian noise vectors.
In this section, the ground-truth signal $\theta\in\mathbb{R}^p$ is normalized to a unit Euclidean norm.

We estimate the noiseless moments in two steps. First, we form raw empirical moments from the noisy observations. Second, we subtract the known Gaussian noise contribution to obtain unbiased estimators of the corresponding noiseless moments.

\begin{definition}[Raw empirical moments]
\label{def:empirical-moments}
For $d=1,2,3$, define the raw empirical $d$-th moment by
\begin{align}
    \widehat{T}_n^{(d)} := \frac{1}{n}\sum_{i=1}^n y_i^{\otimes{d}} \in (\mathbb{R}^q)^{\otimes{d}}.
    \label{eq:empirical-moments-general}
\end{align}
Thus, $\widehat{T}_n^{(1)}\in\mathbb{R}^q$, $\widehat{T}_n^{(2)}\in\mathbb{R}^{q\times q}$, and $\widehat{T}_n^{(3)}\in\mathbb{R}^{q\times q\times q}$.
\end{definition}

Since the observations contain additive Gaussian noise, the raw empirical moments are biased estimators of the noiseless signal moments. We therefore define the following debiased moments.

\begin{definition}[Gaussian-debiased empirical moments]
\label{def:corrected-empirical-moments}
Let $\widehat{\mu}:=\widehat{T}_n^{(1)}\in\mathbb{R}^q$. Define
\begin{align}
    \widetilde{T}_n^{(1)} &:= \widehat{T}_n^{(1)} \in\mathbb{R}^q,
    \label{eq:corrected-empirical-moment-1}
    \\    \widetilde{T}_n^{(2)} &:= \widehat{T}_n^{(2)}-\sigma^2 I_q \in\mathbb{R}^{q\times q},
    \label{eq:corrected-empirical-moment-2}
\end{align}
and define $\widetilde{T}_n^{(3)}\in\mathbb{R}^{q\times q\times q}$ entrywise by
\begin{align}
    (\widetilde{T}_n^{(3)})_{abc}
    &:= (\widehat{T}_n^{(3)})_{abc} - \sigma^2\bigl( \widehat{\mu}_a\delta_{bc} + \widehat{\mu}_b\delta_{ac} + \widehat{\mu}_c\delta_{ab} \bigr),
    \qquad 1\le a,b,c\le q.
    \label{eq:corrected-empirical-moment-3}
\end{align}
\end{definition}

The next lemma records the noisy population moments and shows that $\widetilde{T}_n^{(1)},\widetilde{T}_n^{(2)},\widetilde{T}_n^{(3)}$ are unbiased and consistent estimators of the noiseless moments.

\begin{lem}[Consistency of Gaussian-debiased moments]
\label{lem:corrected-moments-consistency}
Let $Y=\Pi(R_\ell\theta)+\xi$, where $\ell\sim\mathrm{Unif}(\mathbb{Z}/p)$ and $\xi\sim\mathcal{N}(0,\sigma^2 I_q)$ are independent of $\ell$. Denote $\mu:= \mathbb{E}[Y]$. Then
\begin{align}
    \mathbb{E}[Y] &= T^{(1)}(\theta),
    \label{eq:noisy-moment-1}
    \\
    \mathbb{E}[Y^{\otimes2}] &= T^{(2)}(\theta)+\sigma^2 I_q,
    \label{eq:noisy-moment-2}
    \\
    \mathbb{E}[Y_aY_bY_c] &= T^{(3)}_{abc}(\theta) + \sigma^2\bigl( \mu_a\delta_{bc} + \mu_b\delta_{ac} + \mu_c\delta_{ab}\bigr).
    \label{eq:noisy-moment-3}
\end{align}
Consequently, for $d=1,2,3$,
\begin{align}
    \mathbb{E}[\widetilde{T}_n^{(d)}] = T^{(d)}(\theta),
    \label{eq:corrected-moments-unbiased}
\end{align}
and, as $n\to\infty$, $\widetilde{T}_n^{(d)} \xrightarrow{\mathrm{a.s.}} T^{(d)}(\theta)$.
\end{lem}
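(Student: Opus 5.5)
We decompose $Y = X + \xi$, where $X := \Pi(R_\ell\theta)$ is the noiseless projected sample and $\xi\sim\mathcal{N}(0,\sigma^2 I_q)$ is independent of $\ell$. The plan is to expand the products of the coordinates of $X+\xi$ and take expectations term by term, using the Gaussian moment identities
\[
\mathbb{E}[\xi]=0,\qquad \mathbb{E}[\xi_a\xi_b]=\sigma^2\delta_{ab},\qquad \mathbb{E}[\xi_a\xi_b\xi_c]=0 .
\]
Independence of $\xi$ from $\ell$ lets mixed terms factor. By Definition~\ref{def:population-moments}, $\mathbb{E}[X^{\otimes d}]=T^{(d)}(\theta)$.

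\emph{Noisy population moments.} For \eqref{eq:noisy-moment-1}, note that $\mathbb{E}[Y]=\mathbb{E}[X]+\mathbb{E}[\xi]=T^{(1)}(\theta)$. For \eqref{eq:noisy-moment-2}, expand
\[
(X+\xi)^{\otimes 2}=X^{\otimes2}+X\otimes\xi+\xi\otimes X+\xi^{\otimes 2}.
\]
The cross terms vanish because $\mathbb{E}[X_a\xi_b]=\mathbb{E}[X_a]\,\mathbb{E}[\xi_b]=0$, and the last term contributes $\sigma^2 I_q$.

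For \eqref{eq:noisy-moment-3}, expand $(X_a+\xi_a)(X_b+\xi_b)(X_c+\xi_c)$ into eight terms and treat them by the number of noise factors:
\begin{itemize}
\item the pure term $X_aX_bX_c$ gives $T^{(3)}_{abc}(\theta)$;
\item the three terms with exactly one noise factor vanish by independence and $\mathbb{E}[\xi]=0$;
\item the three terms with exactly two noise factors give $\mathbb{E}[X_a]\,\sigma^2\delta_{bc}$ and its two permutations;
\item the pure-noise term vanishes, since odd Gaussian moments are zero.
\end{itemize}
Since $\mathbb{E}[X]=\mathbb{E}[Y]=\mu$, the two-noise terms sum to $\sigma^2(\mu_a\delta_{bc}+\mu_b\delta_{ac}+\mu_c\delta_{ab})$, which is exactly \eqref{eq:noisy-moment-3}.

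\emph{Unbiasedness and consistency.} For $d=1,2$, unbiasedness \eqref{eq:corrected-moments-unbiased} is immediate from linearity and the identities above. For $d=3$ there is a subtlety: the correction in \eqref{eq:corrected-empirical-moment-3} uses $\widehat\mu$, not $\mu$. This is harmless because $\sigma^2$ and $\delta$ are deterministic, so the correction is linear in $\widehat\mu$, and $\mathbb{E}[\widehat\mu]=\mu$. Therefore
\[
\mathbb{E}\bigl[(\widetilde T^{(3)}_n)_{abc}\bigr]=\mathbb{E}[Y_aY_bY_c]-\sigma^2\bigl(\mu_a\delta_{bc}+\mu_b\delta_{ac}+\mu_c\delta_{ab}\bigr)=T^{(3)}_{abc}(\theta).
\]

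For almost-sure convergence, each entry of $\widehat T^{(d)}_n$ is an empirical average of i.i.d.\ copies of a polynomial of degree at most three in $Y$. Such a polynomial is integrable, because $X$ is bounded (it takes finitely many values) and Gaussians have all moments. The strong law of large numbers therefore gives $\widehat T^{(d)}_n\to\mathbb{E}[Y^{\otimes d}]$ almost surely, and in particular $\widehat\mu\to\mu$ almost surely. Since $\widetilde T^{(d)}_n$ is a continuous (affine) function of $\widehat T^{(d)}_n$ and $\widehat\mu$, the continuous mapping theorem combined with \eqref{eq:noisy-moment-1}--\eqref{eq:noisy-moment-3} yields $\widetilde T^{(d)}_n\to T^{(d)}(\theta)$ almost surely.

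The argument is routine throughout. The only steps needing care are the bookkeeping of the two-noise terms in the third moment and the observation that the plug-in $\widehat\mu$ preserves unbiasedness because it enters linearly.
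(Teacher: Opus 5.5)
Your proof is correct and follows essentially the same route as the paper. Both write $Y=X+\xi$, expand the products, use independence together with the Gaussian identities $\mathbb{E}[\xi_a]=0$, $\mathbb{E}[\xi_a\xi_b]=\sigma^2\delta_{ab}$, $\mathbb{E}[\xi_a\xi_b\xi_c]=0$, and finish with the strong law of large numbers. The paper leaves two points implicit that you make explicit, and both are handled correctly: the plug-in $\widehat\mu$ enters the third-order correction linearly, so unbiasedness is preserved, and almost-sure convergence passes through that affine correction.
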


\begin{proof}[Proof of Lemma~\ref{lem:corrected-moments-consistency}]
Write $Y=X+\xi$, where $X=\Pi(R_\ell\theta)$. Since $\xi$ is independent of $X$, has mean zero, and has covariance $\sigma^2 I_q$, we immediately obtain \eqref{eq:noisy-moment-1} and~\eqref{eq:noisy-moment-2}. For the third moment, expanding $(X_a+\xi_a)(X_b+\xi_b)(X_c+\xi_c)$ and using the Gaussian identities $\mathbb{E}[\xi_a]=0$, $\mathbb{E}[\xi_a\xi_b]=\sigma^2\delta_{ab}$, and $\mathbb{E}[\xi_a\xi_b\xi_c]=0$, gives~\eqref{eq:noisy-moment-3}.
The unbiasedness identities in~\eqref{eq:corrected-moments-unbiased} follow directly from~\eqref{eq:noisy-moment-1}--\eqref{eq:noisy-moment-3} and Definition~\ref{def:corrected-empirical-moments}. Finally, the raw empirical moments converge almost surely, entrywise, by the strong law of large numbers, since the required moments are finite.
\end{proof}

\subsection{Benchmark algorithms}
\label{subsec:benchmark-algorithms}

We compare two moment-based reconstruction procedures with a likelihood-based benchmark. The first benchmark is expectation-maximization (EM) for the projected MRA likelihood. The second method is direct moment optimization, which minimizes a least-squares discrepancy between the debiased empirical moments $\widetilde{T}_n^{(d)}$ and their model-predicted counterparts ${T}^{(d)}$. The third method is direct Fourier-cosine moment optimization, which applies the same optimization principle after transforming the empirical moments to the Fourier-cosine coordinates ${M}^{(3)}$. These comparisons separate three effects: likelihood-based estimation, moment matching in the original projected coordinates, and moment matching in the Fourier-cosine coordinates that expose the algebraic magnitude and phase relations.

\subsubsection{Expectation-maximization}
We apply the EM algorithm to the projected MRA model, defined in Problem~\ref{prob:projected-mra} with latent shifts $\ell_i$. For a current iterate $\theta^{(t)}$, the E-step computes the posterior responsibility of each shift:
\begin{align}
    w_{i,\ell}^{(t)}
    &:=
    \mathbb{P}(\ell_i=\ell\mid y_i,\theta^{(t)})
    \notag\\
    &=
    \frac{\exp\!\left(-\frac{1}{2\sigma^2} \bigl\|y_i-\Pi(R_\ell\theta^{(t)})\bigr\|_F^2 \right)}{\sum_{m\in\mathbb{Z}/p} \exp\!\left( -\frac{1}{2\sigma^2} \bigl\|y_i-\Pi(R_m\theta^{(t)})\bigr\|_F^2 \right)},
    \label{eq:projected-mra-em-estep}
\end{align}
where $\| \cdot \|$ denotes the Frobenius norm. The M-step updates $\theta$ by minimizing the expected squared reconstruction error,
\begin{align}
    \theta^{(t+1)} \in \argmin_{\theta\in\mathbb{R}^p} \sum_{i=1}^n \sum_{\ell\in\mathbb{Z}/p} w_{i,\ell}^{(t)} \bigl\|y_i-\Pi(R_\ell\theta)\bigr\|_F^2.
    \label{eq:projected-mra-em-mstep}
\end{align}
Since the objective in~\eqref{eq:projected-mra-em-mstep} is quadratic in $\theta$, the
M-step is a linear least-squares problem,
and the solution is given by 
\begin{align}
    \theta^{(t+1)} = \left(\sum_{i=1}^n\sum_{\ell\in\mathbb{Z}/p} w_{i,\ell}^{(t)} R_\ell^\top \Pi^\top \Pi R_\ell \right)^{-1}\sum_{i=1}^n\sum_{\ell\in\mathbb{Z}/p} w_{i,\ell}^{(t)} R_\ell^\top \Pi^\top y_i .
    \label{eq:projected-mra-em-normal-equations}
\end{align}
In the numerical experiments, EM is run from $S_{\mathrm{EM}} = 5$ different initializations, and the estimate with the largest final log-likelihood is retained. We denote by $I_{\mathrm{EM}}$ the number of EM iterations until convergence, or the prescribed maximum number of iterations if convergence is not reached. Because the likelihood is invariant under the dihedral ambiguity of the projected model, the EM estimate is evaluated only up to the dihedral orbit. The dominant cost in the E-step is the evaluation of $np$ shifted projections in dimension $q$, which costs $O(npq)=O(nq^2)$. The M-step is a linear least-squares problem in $\theta\in\mathbb{R}^p$; in a straightforward implementation, solving the corresponding $p\times p$ linear system costs $O(p^3)=O(q^3)$ per iteration. Thus, a straightforward implementation with $S_{\mathrm{EM}}$ initializations and $I_{\mathrm{EM}}$ iterations costs $O\!\left(S_{\mathrm{EM}}I_{\mathrm{EM}}(nq^2+q^3)\right)$. In the regimes considered here, the E-step typically dominates when $n$ is large compared with $q$.

\subsubsection{Direct moment optimization}
The direct $T$-moment baseline minimizes the discrepancy between the debiased empirical moments and their model-predicted counterparts. Specifically, using the debiased empirical moments $\widetilde{T}_n^{(d)}$ from Definition~\ref{def:corrected-empirical-moments} and the population moments $T^{(d)}$ defined in~\eqref{eq:population-moment-definition}, we consider the block-normalized least-squares objective
\begin{align}
    \mathcal{L}_T(\vartheta) := \sum_{d=1}^3 \frac{\|T^{(d)}(\vartheta)-\widetilde{T}_n^{(d)}\|_F^2} {\|\widetilde{T}_n^{(d)}\|_F^2}.
    \label{eq:direct-T-moment-loss-expanded}
\end{align}
The normalization balances the contributions of the different moment orders, whose raw scales may differ substantially. We then define
\begin{align}
    \widehat{\theta}_T \in \argmin_{\vartheta\in\mathbb{R}^p} \mathcal{L}_T(\vartheta).
    \label{eq:direct-T-moment-loss}
\end{align}

In implementation, the three normalized residual blocks in~\eqref{eq:direct-T-moment-loss-expanded} are stacked into a single residual vector, and the resulting nonlinear least-squares problem is solved using MATLAB's \texttt{lsqnonlin} routine. The optimization is run from $S_T$ initializations, and the solution with the smallest final value of $\mathcal{L}_T$ is retained. The empirical moments $\widehat{T}^{(1)},\widehat{T}^{(2)},\widehat{T}^{(3)}$ are formed once from the data; computing them directly from $n$ observations in $\mathbb{R}^q$ costs $O(nq^3)$, dominated by the third moment. For each candidate signal evaluated during the optimization, computing the population moments $T^{(1)},T^{(2)},T^{(3)}$ by averaging over all $p$ shifts costs $O(pq^3)=O(q^4)$. Thus, if $I_T$ denotes the number of nonlinear least-squares iterations, the total computational cost of the direct moment method is approximately
\begin{align}
    O(nq^3) + O(S_T I_T q^4).
\end{align}

\subsubsection{Direct Fourier-cosine moment optimization}
The direct $M$-moment baseline performs the analogous optimization in the Fourier-cosine coordinates. Let $\widehat{M}^{(2)}$ and $\widehat{M}^{(3)}$ be the transformed empirical moments obtained by applying~\eqref{eq:transfer-projected-to-cosine-moments} to the debiased empirical moments from Definition~\ref{def:corrected-empirical-moments}, and let $M^{(2)}(\vartheta)$ and $M^{(3)}(\vartheta)$ be the corresponding model-predicted Fourier-cosine moments, as defined in~\eqref{eq:cosine-population-moment-definition}. We solve
\begin{align}
    \widehat{\theta}_M \in \argmin_{\vartheta\in\mathbb{R}^p} \mathcal{L}_M(\vartheta),
    \label{eq:direct-M-moment-loss}
\end{align}
where
\begin{align}
    \mathcal{L}_M(\vartheta)
    &:= \frac{\|T^{(1)}(\vartheta)-\widetilde{T}_n^{(1)}\|_F^2} {\|\widetilde{T}_n^{(1)}\|_F^2} + \frac{\|M^{(2)}(\vartheta)-\widehat{M}^{(2)}\|_F^2} {\|\widehat{M}^{(2)}\|_F^2} + \frac{\|M^{(3)}(\vartheta)-\widehat{M}^{(3)}\|_F^2} {\|\widehat{M}^{(3)}\|_F^2}.
    \label{eq:direct-M-moment-loss-expanded}
\end{align}
The first-order term is kept in projected coordinates because it determines the zero-frequency component directly, while the second- and third-order terms are evaluated in the Fourier-cosine coordinates that reveal the magnitude and phase relations. As for direct $T$-optimization, the normalized residual blocks in~\eqref{eq:direct-M-moment-loss-expanded} are stacked into a single residual vector and optimized using MATLAB's \texttt{lsqnonlin} routine. The optimization is run from $S_M$ initializations, and the solution with the smallest final value of $\mathcal{L}_M$ is retained. We denote by $I_M$ the number of nonlinear least-squares iterations used by this procedure.

In our implementation, the empirical Fourier-cosine moments are computed once from the data. The preprocessing stage costs $O(nq^3)+O(q^4)$: the first term comes from forming the empirical moments up to order three, and the second from transferring the full third-order tensor to Fourier-cosine coordinates using~\eqref{eq:transfer-projected-to-cosine-moments}. During optimization, the model-predicted Fourier-cosine moments $M^{(2)}(\vartheta)$ and $M^{(3)}(\vartheta)$ are evaluated as full moment tensors. Consequently, each residual evaluation costs $O(q^4)$, comparable to the direct moment objective. Including the $S_M$ starts and $I_M$ nonlinear least-squares iterations, the overall cost is
\begin{align}
    O(nq^3) + O(S_M I_M q^4).
\end{align}

\subsubsection{Cost summary}
All moment-based methods first require forming empirical moments from the data. With a direct implementation, the empirical moments up to order three in the projected coordinates cost $O(nq^3)$ to form, dominated by the third-order moment. Thus, the total costs in our implementation are
\[
    \begin{array}{c|c}
    \text{Method} & \text{Total cost} \\
    \hline
    \text{Direct }T\text{-moment optimization}
    & O(nq^3) + O(S_T I_T q^4) \\
    \text{Direct }M\text{-moment optimization}
    & O(nq^3) + O(S_M I_M q^4) \\
    \text{EM}
    & O(S_{\mathrm{EM}} I_{\mathrm{EM}} n q^2)
    \end{array}
\]
Here, $S_T,S_M,S_{\mathrm{EM}}$ denote the number of starts, and $I_T,I_M,I_{\mathrm{EM}}$ denote the corresponding numbers of iterations. The key distinction is that moment-based methods pay the sample-size-dependent cost only once, when the empirical moments are formed, while EM continues to scale linearly with $n$ at every iteration.

\begin{remark}
Empirically, Algorithm~\ref{alg:projected-mra-constructive-recovery} is less stable than direct $M$-moment optimization, which is itself less stable than direct $T$-moment optimization. The latter instability is consistent with the fact that converting the projected moments $T^{(d)}$ into Fourier-cosine moments $M^{(d)}$ requires applying $A^{-1}$ along the tensor modes, which amplifies finite-sample fluctuations. Thus, although the Fourier-cosine representation is useful for proving population identifiability, it is not necessarily the most stable representation for finite-sample estimation.

A natural future direction is therefore to separate two issues: the stability of the reflection-invariant phase-recovery mechanism itself, and the additional conditioning loss introduced by the projected-MRA change of coordinates. The former can be studied more directly in the dihedral orbit-recovery problem, where the same cosine phase-coupling relations arise without first transferring projected moments through $A^{-1}$. We leave this direction to future work.
\end{remark}

\subsection{Numerical results}
\label{subsec:numerical-results}

We now evaluate the finite-sample performance of the three reconstruction methods described above. In all experiments, we fix $p=13$, so that $q=6$, generate a generic unit-norm signal $\theta^\star$, and draw $n=2\times10^4$ observations from the finite-sample projected MRA model for each noise level $\sigma$. We use $20$ logarithmically spaced noise levels in the range $\sigma\in[0.05,1]$, and for each value of $\sigma$ we run $100$ independent Monte Carlo trials. The Gaussian-debiased moments are formed as in Definition~\ref{def:corrected-empirical-moments}. For the $M$-moment objective, these debiased moments are further transformed to Fourier-cosine coordinates, yielding $\widehat{M}^{(2)}$ and $\widehat{M}^{(3)}$.

For both nonlinear moment objectives, the residual blocks are optimized using MATLAB's \texttt{lsqnonlin} routine with the Levenberg-Marquardt algorithm. Each direct moment optimization is run from $S_T=S_M=20$ initializations. All starts are run to completion, and the estimate with the smallest final residual norm is retained. EM is run from $S_{\mathrm{EM}}=5$ random initializations, and the estimate with the largest final likelihood is retained. For the nonlinear least-squares solvers, we use a maximum of $300$ iterations and $3000$ function evaluations, with function and step tolerances set to $10^{-10}$. For EM, we use a maximum of $2000$ iterations and stop earlier if the relative change between successive iterations falls below $10^{-8}$.

Since the signal is identifiable only up to the dihedral action, reconstruction accuracy is measured by the orbit-aligned error
\begin{align}
    d_{\mathrm{err}}(\widehat{\theta},\theta^\star) := \min_{g\in D_{2p}}\|\widehat{\theta}-g\cdot\theta^\star\|_2 .
    \label{eq:orbit-error}
\end{align}
To diagnose the effect of the Fourier-cosine transformation on empirical noise, we compare the third-order projected and transformed empirical moments with their noiseless population counterparts:
\[
    d_{\mathrm{Err},T_3} := \frac{\|\widetilde{T}_n^{(3)}-T^{(3)}(\theta^\star)\|_F} {\|T^{(3)}(\theta^\star)\|_F},
    \qquad    d_{\mathrm{Err},M_3} := \frac{\|\widehat{M}^{(3)}-M^{(3)}(\theta^\star)\|_F} {\|M^{(3)}(\theta^\star)\|_F}.
\]

Figure~\ref{fig:3}(a)-(b) compares the reconstruction accuracy of the three methods as a function of the noise level. EM achieves the smallest error over most of the range, as expected from a likelihood-based method that uses the full sample distribution. 
The dashed reference slopes in Figure~\ref{fig:3}(a)-(b) indicate two observed noise-scaling regimes.
In the low-noise regime, the EM mean-squared error scales as $\propto \sigma^2$, matching the classical scaling for a non-coherent estimation problem. As the noise level increases, the problem enters the low-SNR regime, where the mean-squared error of both EM and the moment-based estimators scales  as $\propto \sigma^6$. This agrees with the theoretical prediction for the projected MRA model: the phase information needed to identify the dihedral orbit first appears through third-order moment relations, so the information-theoretic sample-complexity exponent is governed by order three, giving the effective scaling $n/\sigma^6\to\infty$ for consistent recovery, as stated in Corollary~\ref{cor:projected-mra-sample-complexity}.

In this high-noise regime, the EM and the direct optimization of the $T$-moment become closer to each other, suggesting that the moments retain much of the statistically relevant information captured by the likelihood. By contrast, direct $M$-moment optimization deteriorates earlier. This does not reflect a loss of information in the Fourier-cosine coordinates at the population level; rather, it reflects finite-sample conditioning. The empirical transformation from the moments of the projected MRA observation to the Fourier-cosine representation $M$-moments applies $A^{-1}$ in tensor modes and therefore amplifies sampling fluctuations, making the transformed moment objective less stable at higher noise levels.

Figure~\ref{fig:3}(c) reports the median runtime. The two moment-based methods first form empirical moments and then solve nonlinear least-squares problems; after moment formation, their reconstruction costs are independent of $n$. Moreover, their runtime is nearly insensitive to the noise level in the range considered, since the empirical objectives have a fixed dimension and are optimized using a comparable number of nonlinear least-squares iterations across $\sigma$. In contrast, each EM iteration uses the full sample and therefore scales linearly with $n$. In addition, the number of EM iterations increases with the noise level: as $\sigma$ grows, the likelihood surface becomes flatter and the posterior distribution over shifts is less concentrated. The observed runtime growth is consistent with the iteration-complexity scaling of EM in the low-SNR regime, which behaves as $\sigma^4$ in the analysis of~\cite{balanov2026expectation}.

Finally, Figure~\ref{fig:3}(d) explains the gap between the two moment-based methods. At the population level, the projected and Fourier-cosine moment representations are equivalent. In finite samples, however, the transformation from $\widetilde{T}_n^{(3)}$ to $\widehat{M}^{(3)}$ applies $A^{-1}$ in each tensor mode. Thus,  empirical fluctuations in the projected third moment are transformed by three copies of $A^{-1}$, leading to a substantially larger relative error in $\widehat{M}^{(3)}$. This noise amplification explains why direct $T$-moment optimization can outperform direct $M$-moment optimization, even though the Fourier-cosine coordinates are the natural coordinates for the population identifiability proof.

\begin{figure}[t!]
    \centering
    \includegraphics[width=0.9\linewidth]{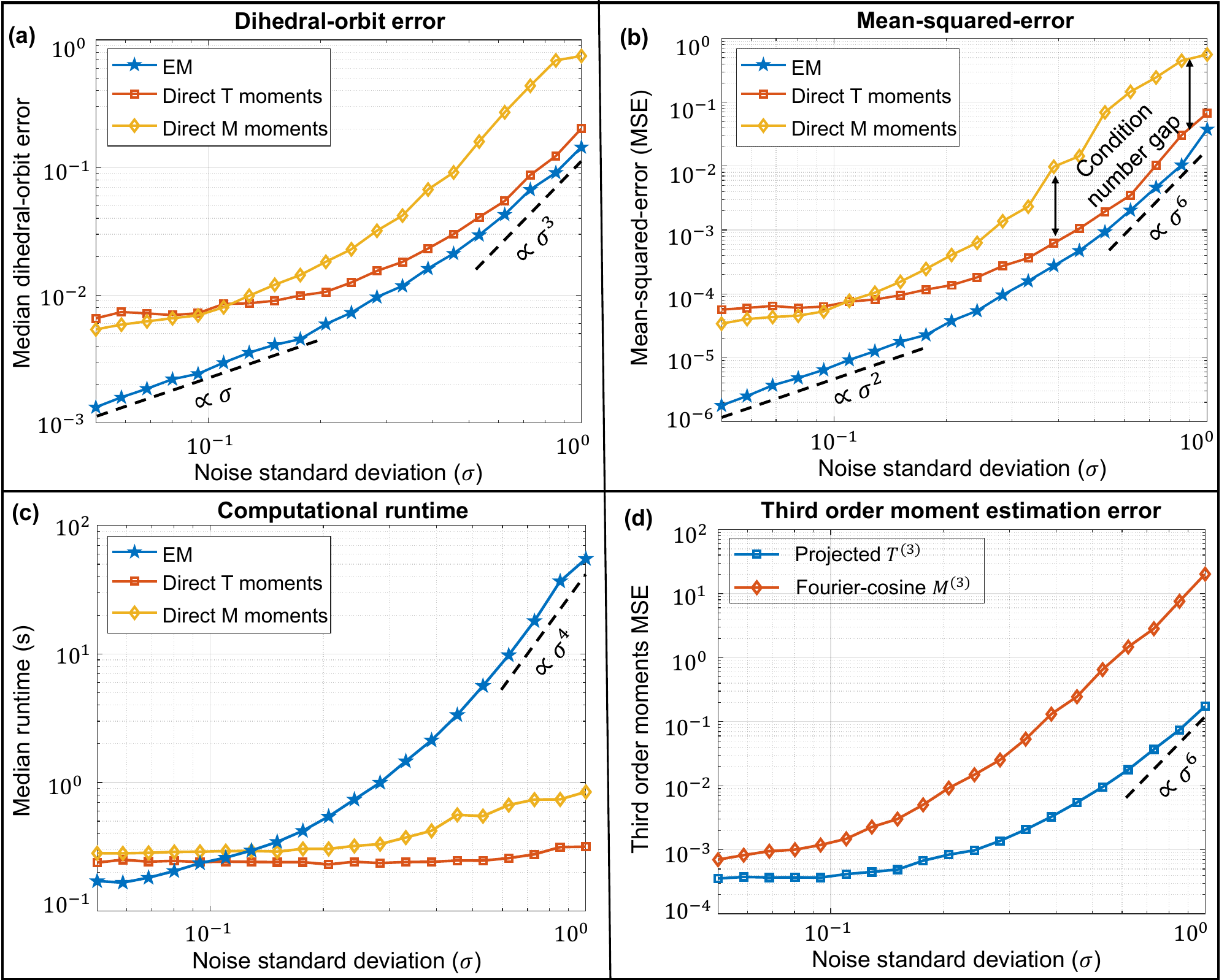}
    \caption{
    \textbf{Accuracy, runtime, and conditioning of finite-sample reconstruction.}
    We compare expectation--maximization (EM), direct moment optimization $T^{(d)}$, and  Fourier-cosine moment optimization $M^{(d)}$ for the projected MRA model with $p=13$, $q=6$, and $n=2\times10^4$. Reconstruction errors are computed up to the dihedral ambiguity.
    \textbf{(a)} Median reconstruction error and
    \textbf{(b)} mean-squared reconstruction error as functions of $\sigma$. EM and direct $T$-moment optimization achieve the best accuracy, while direct $M$-moment optimization degrades earlier as the transformed moment equations become noisier. 
    \textbf{(c)} Median runtime. EM uses the full likelihood at every iteration, while the two moment-based methods solve nonlinear least-squares problems after empirical moment formation.
    \textbf{(d)} Relative third-order moment error before and after transforming to Fourier-cosine coordinates. The empirical $M^{(3)}$ error is substantially larger than the projected $T^{(3)}$ error because the transformation applies $A^{-1}$ in each tensor mode. This finite-sample noise amplification explains the gap between $T$- and $M$-based moment methods despite their equivalence at the population level.
    }
    \label{fig:3}
\end{figure}

\section{Discussion and outlook}
\label{sec:discussion}

\subsection{A continuous analogue: bandlimited projected MRA on \texorpdfstring{$\mathrm{SO}(2)$}{SO(2)}}
\label{subsec:continuous-so2-extension}

The projected MRA model studied in this paper is formulated on the finite cyclic group $\mathbb{Z}/p$. A natural continuous analogue is obtained by replacing the cyclic shifts by rotations in $\mathrm{SO}(2)$. In such a setting, the unknown signal may be modeled as a real-valued function $f:\mathbb{S}^1\to\mathbb{R}$ on the circle, with observations of the form
\begin{align}
    Y = \Pi(R_\alpha f) + \xi,
    \qquad \alpha\sim\mathrm{Unif}([0,2\pi)),    \label{eq:continuous-projected-mra}
\end{align}
where $R_\alpha f(x)=f(x-\alpha)$, $\Pi$ is a reflection-symmetrizing projection, 
\begin{align}
    (\Pi f)(x) = f(x)+f(-x),
\end{align}
and $\xi$ is additive noise. 
The projection removes the orientation of the circle: rotating clockwise or counterclockwise produces the same projected orbit. Thus, although the latent transformations in~\eqref{eq:continuous-projected-mra} belong to $\mathrm{SO}(2)$, the identifiable ambiguity enlarges to the full orthogonal group $\mathrm{O}(2)$, generated by rotations and reflection.

To make the continuous problem finite-dimensional and computationally tractable, and to place it within the finite-dimensional framework assumed by standard sample-complexity results, one should assume that the functions are band-limited.
Namely, assume that
\begin{align}
    f(x) = \sum_{k=-K}^{K} \hat{f}_k e^{ikx},
    \qquad
    \hat{f}_{-k}=\overline{\hat{f}_k},    \label{eq:bandlimited-so2-signal}
\end{align}
for some finite bandwidth $K$. Under rotation, the Fourier coefficients transform as $\widehat{R_\alpha f}_k = e^{-ik\alpha}\hat{f}_k$.
After reflection symmetrization, each nonzero angular frequency contributes through the real-valued cosine-side coefficient
\begin{align}
    C_\alpha[k] = \hat{f}_k e^{-ik\alpha} + \hat{f}_{-k} e^{ik\alpha} = 2\Re\{\hat{f}_k e^{-ik\alpha}\},
    \qquad k=1,\ldots,K. \label{eq:continuous-cosine-coefficients}
\end{align}
Thus, the role of the projected Fourier-cosine coefficients $C_\ell[k]$ in the finite model~\eqref{eq:Cl-definition} is played by the angular Fourier-cosine coefficients $C_\alpha[k]$.

Writing $\hat{f}_k=r_ke^{i\phi_k}$, the second and third population moments of these coefficients obey the same selection rules as in the finite cyclic case, except that frequencies are now ordinary integers rather than elements of $\mathbb{Z}/p$. Averaging over $\alpha$ eliminates all terms whose total angular frequency is nonzero. Consequently,
\begin{align}
    M^{(2)}_{k,m} = \mathbb{E}_\alpha[C_\alpha[k]C_\alpha[m]] = 2r_k^2\delta_{km},
    \label{eq:so2-second-moment}
\end{align}
which is similar to Proposition~\ref{prop:second-moment-amplitudes}, so the second moment recovers the nonzero Fourier magnitudes. Similarly, analogously to Proposition~\ref{prop:third-moment-cosines}, whenever $a+b=c$,
\begin{align}
    M^{(3)}_{a,b,c} = \mathbb{E}_\alpha[C_\alpha[a]C_\alpha[b]C_\alpha[c]] = 2r_ar_br_c\cos(\phi_a+\phi_b-\phi_c).
    \label{eq:so2-third-moment}
\end{align}
Thus, the third moments again contain reflection-invariant cosine phase couplings.

The resulting phase-recovery mechanism is closely parallel to the finite projected MRA proof. The chain entries $M^{(3)}_{1,j,j+1}$, $1\le j\le K-1$, give $\cos(\phi_1+\phi_j-\phi_{j+1})$, and hence determine equations of the form
\begin{align}
    \phi_1+\phi_j-\phi_{j+1} \equiv \pm\beta_j \pmod{2\pi},
\end{align}
similar to the equations obtained in~\eqref{eq:chain-equation-j}-\eqref{eq:chain-equation-q}
After fixing a rotation gauge, for example $\phi_1=0$, these equations propagate the phases from low to high frequencies up to sign choices. Additional third-order relations, obtained from $M^{(3)}_{2,j,j+2}$, give consistency constraints of the form
\[
    \cos(-\varepsilon_1\beta_1+\varepsilon_j\beta_j+\varepsilon_{j+1}\beta_{j+1})=d_j,
\]
similar to~\eqref{eq:sign-only-dj}-\eqref{eq:sign-only-dstar} which are used to prune the sign choices. Under generic sign-separation assumptions analogous to those used in the cyclic proof (Assumption~\ref{ass:generic-sign-separation}), the sign branch are determined up to the global sign flip $\varepsilon\mapsto-\varepsilon$, corresponding to reflection of $f$.

We note that there is one important difference from the finite cyclic model. In the projected MRA problem on $\mathbb{Z}/p$, the wrapping relation $1+q+q=p$ supplies a terminal equation that recovers the anchor phase $\phi_1$ modulo the finite cyclic-shift ambiguity. In the continuous $\mathrm{SO}(2)$ setting, this terminal step is not needed in the same form: the anchor phase represents the continuous rotation degree of freedom and can be fixed by a gauge choice. Thus, after bandlimiting, the $\mathrm{SO}(2)$ model  admits a recovery closely analogous to the cyclic result: the first three moments generically determine the $\mathrm{O}(2)$-orbit of the unknown bandlimited signal. 

\subsection{Future work}

Several directions remain open. On the theoretical side, a natural next step is to understand how far the model studied in this work extends beyond the specific reflection-symmetric projection considered here. More generally, one would like to characterize classes of projections for which low-order moments of the projected signal preserve the invariant information of the original orbit-recovery problem~\cite{balanov2026group} or transform it into the invariant structure of an enlarged symmetry group. This question is particularly relevant for higher-dimensional projection-based inverse problems, including cryo-EM and related settings, where projection masks orientation information but low-order moments may still retain enough structure for recovery.

A second important direction is the design of statistically efficient reconstruction algorithms. The finite-sample procedures studied here are deliberately simple benchmarks: EM provides a likelihood-based reference, while direct moment and Fourier-cosine moment optimization test two natural moment representations. A useful next step would be to develop reconstruction methods that retain the constructive identifiability guarantees underlying Algorithm~\ref{alg:projected-mra-constructive-recovery}, but replace its exact algebraic pruning steps with statistically stable procedures that use the full noisy moment information.

\section*{Data Availability}
The detailed implementation and code are available at \href{https://github.com/AmnonBa/projected-MRA}{https://github.com/AmnonBa/projected-MRA}.

\section*{Acknowledgment}
T.B. and D.E. are supported in part by BSF under Grant 2020159. T.B. is also supported in part by NSF-BSF under Grant 2024791, and in part by ISF under Grant 1924/21. DE is also supported by NSF DMS2205626.

\bibliographystyle{plain}

\begin{appendices}

\end{appendices}

\end{document}